\newtheorem{Theorem}{Theorem}[section]
\newtheorem{lem}[Theorem]{Lemma}
\newtheorem{Definition}[Theorem]{Definition}
\newtheorem{Corollary}[Theorem]{Corollary}
\newtheorem{Construction}[Theorem]{Construction}
\newtheorem{Example}[Theorem]{Example}
\numberwithin{equation}{section}
\begin{document}
\title{{On non-expandable cross-bifix-free codes\LARGE }}

\author{Chunyan Qin$^1$,    Bocong Chen$^1$ and Gaojun Luo$^2$\footnote{E-mail addresses: bocongchen@foxmail.com (B. Chen),
chunyan\_qin@126.com (C. Qin) and  gaojun.luo@ntu.edu.sg (G. Luo).}}  


\date{\small
$1.$ School of Mathematics, South China University of Technology, Guangzhou, 510641, China\\
$2.$  School
of Physical and Mathematical Sciences, Nanyang Technological University,
639798, Singapore\\
}


\maketitle

\begin{abstract}
A  cross-bifix-free code of length $n$ over $\mathbb{Z}_q$  is defined as a non-empty subset of $\mathbb{Z}_q^n$
satisfying  that the prefix  set of each codeword is disjoint from the suffix set of every codeword.
Cross-bifix-free codes have found important applications in digital  communication systems.
One of the main research problems on cross-bifix-free codes is to construct cross-bifix-free codes as large as possible in size.
Recently, Wang and Wang  introduced a  family of cross-bifix-free codes $S_{I,J}^{(k)}(n)$, which is a generalization of
the classical cross-bifix-free codes studied early  by Lvenshtein, Gilbert and Chee {\it et al.}.
It is known that $S_{I,J}^{(k)}(n)$ is nearly optimal in size and
$S_{I,J}^{(k)}(n)$  is non-expandable if $k=n-1$ or $1\leq k<n/2$.
In this paper, we first show that  $S_{I,J}^{(k)}(n)$ is non-expandable if and only if   $k=n-1$ or $1\leq k<n/2$,
thereby improving the results in [Chee {\it et al.}, IEEE-TIT, 2013] and [Wang and Wang,  IEEE-TIT, 2022].
We then
construct a new family of cross-bifix-free codes $U^{(t)}_{I,J}(n)$ to expand $S_{I,J}^{(k)}(n)$ such that
the resulting larger code $S_{I,J}^{(k)}(n)\bigcup U^{(t)}_{I,J}(n)$ is a non-expandable
cross-bifix-free code whenever $S_{I,J}^{(k)}(n)$ is expandable.
Finally, we present an explicit  formula for the size of
$S_{I,J}^{(k)}(n)\bigcup U^{(t)}_{I,J}(n)$.

\medskip
\textbf{MSC:} 94B25.

\textbf{Keywords:} Cross-bifix-free code, bifix-free code, non-expandable code.

\end{abstract}

\section{Introduction}
The class of cross-bifix-free codes (also named non-overlapping codes or  strongly regular codes in  literature)
arises in dealing with the problem of establishing synchronization between the transmitter
and  receiver in  digital communication systems.
The historical engineering method began with the introduction of {\it bifix}, a name proposed by Nielsen in \cite{Nielsen}.
In order to achieve fast and reliable synchronization in binary data streams, Massey \cite{Massey} introduced
the notion of {\it bifix-free} synchronization word.
A bifix-free word is a sequence of symbols satisfying that no prefix of any length of the word is identical with any suffix
of the word.
Bajic {\it et al.} \cite{Bajic1} showed that the distributed sequence entails a simultaneous search for
a particular set of synchronization words: A set of  words not only satisfies
that each word is bifix-free but also satisfies that no prefix of any length of any word in the set is a suffix of any other word in the set.
The particular set of synchronization words is termed as a {\it cross-bifix-free code}.

The research   of the frame synchronization  involving cross-bifix-free codes is to establish and maintain the connection between the transmitter and receiver of digital communication systems \cite{Bajic1,A.J,Shen,Stefanovic}.
Cross-bifix-free codes are useful in the sense that the codewords with an error or in a state of a certain decoding automaton do not propagate to subsequent incorrect decoding codewords. Cross-bifix-free codes have also found applications in DNA storage systems \cite{Levy,Yazdi}, pattern matching \cite{Crochemore} and automata theory \cite{Berstel}.

One of the main research problems on cross-bifix-free codes is to construct cross-bifix-free codes as large as possible in size.
The first systematic construction of cross-bifix-free codes $S_{q}^{(k)}(n)$ was due to  Levenshtein \cite{V.I.1,V.I.2}
(see Construction \ref{S1} in Section $2$) in 1964, which was also studied  by Gilber \cite{Gilbert} and rediscovered by Chee {\it et al}. \cite{Chee}.
Bajic  {\it et al.}  \cite{Bajic2,Bajic3} presented a general method to produce cross-bifix-free codes.
Bilotta {\it et al.} \cite{Bilotta} provided a construction of  binary cross-bifix-free codes based on Dyck paths.
It was shown in \cite{Chee} that $S^{(k)}_{q}(n)$ is nearly optimal in size and  a sufficient condition on $k$ was given to ensure  that $S^{(k)}_{q}(n)$ is non-expandable.
Blackburn \cite{Blackburn15} provided a simple construction for a class of cross-bifix-free codes with optimal cardinality.
Barucci {\it et al.} \cite{Barcucci} proposed a construction of $q$-ary $(q\geq3)$ cross-bifix-free codes based on colored Motzkin paths.
Anselmo,  Giammarresi and Madonia \cite{Anselmo} established a general constructive method for non-expandable cross-bifix-free codes by using some structural properties of non-expandable cross-bifix-free codes.
Recently, Wang and Wang \cite{Wang} introduced a  family of $q$-ary cross-bifix-free codes $S_{I,J}^{(k)}(n)$, which is a generalization of
the classical cross-bifix-free codes $S^{(k)}_{q}(n)$.
A number of interesting results were obtained in
\cite{Wang}; for example, \cite{Wang} gave an exact formula for the number of fixed-length words that do not contain the codewords in a
variable-length
cross-bifix-free code as subwords, which solves an open problem by Bilotta \cite{Bilotta0}.
Blackburn {et al.} \cite{Blackburn23} provided some general bounds and  several constructions for certain binary codes
in which over-laps of certain specified sizes are forbidden.
Very recently, Stanovnik,   Mo\v{s}kon and   Mraz \cite{Stanovnik} characterized the maximal cross-bifix-free codes,
formulated  the maximum cross-bifix-free code problem as an integer optimization problem
and determined necessary conditions for optimality of a cross-bifix-free code.

In this paper, we only consider  cross-bifix-free codes with fixed length. Readers who are interest in variable-length cross-bifix-free codes
can refer to \cite{Bilotta0,Wang}. Motivated by the prior works, especially the works of \cite{Chee} and \cite{Wang}, the objective of this paper is to further explore the non-expandable property of the $q$-ary code $S_{I,J}^{(k)}(n)$.
Our starting point is \cite[Theorem 1]{Wang} which says that $S_{I,J}^{(k)}(n)$  is non-expandable if $k=n-1$ or $1\leq k<n/2$.
After carefully analyzing  the combinatorial structure of $S_{I,J}^{(k)}(n)$,
we  show that  $S_{I,J}^{(k)}(n)$ is non-expandable if and only if   $k=n-1$ or $1\leq k<n/2$,
which improves  the results in \cite[Theorem 3.1]{Chee} and \cite[Theorem 1]{Wang}.
In other words, $S_{I,J}^{(k)}(n)$ is  expandable precisely when   $n/2\leq k\leq n-2.$
We then
construct a new family of cross-bifix-free codes $U^{(t)}_{I,J}(n)$ to expand $S_{I,J}^{(k)}(n)$. The new
 code $S_{I,J}^{(k)}(n)\bigcup U^{(t)}_{I,J}(n)$ is a non-expandable
cross-bifix-free code whenever $S_{I,J}^{(k)}(n)$ is expandable.
Finally, we present an enumerative formula for the size of
$S_{I,J}^{(k)}(n)\bigcup U^{(t)}_{I,J}(n)$ when $n/2\leq k\leq n-2.$
We remark that $S_{I,J}^{(k)}(n)\bigcup U_{I,J}^{(t)}(n)$
is a disjoint union and that the size of the $q$-ary code $S_{I,J}^{(k)}(n)$ is equal to $q^{n-k-2}|I|^k|J|^2$ since $k\geq n/2$.
Consequently, the problem of counting the size of $S_{I,J}^{(k)}(n)\bigcup U_{I,J}^{(t)}(n)$  is fully converted to that of
counting the size of $U_{I,J}^{(t)}(n)$.
We find explicit formulas for the size of $U_{I,J}^{(t)}(n)$ in the case where $n/2\leq k<3n/4-1/2$.
For the rest cases, recurrence relations
on the size of $U_{I,J}^{(t)}(n)$ are presented.
As an illustrated example, we list the values of $|S_2^{(k)}(n)|$ and $|S_2^{(k)}(n)\bigcup U_{2}^{(t)}(n)|$
in Tables $1$ and $2$ when $5\leq n\leq 17$ and $n/2\leq k\leq n-2$.
It is clear from Tables $1$ and $2$ that the gap between $|S_2^{(k)}(n)\cup U_2^{(t)}(n)|$ and  $|S_2^{(k)}(n)|$ is sometimes rather large, such as  $|S_2^{(15)}(17)|=1$ and
$|S_2^{(15)}(17)\bigcup U_{2}^{(2)}(17)|=1433$.

This paper is organized as follows. In Section 2, we review some basic definitions and notations related to the notions of   bifix-free words and cross-bifix-free codes. In Section 3, we first analyze the non-expandable property of the code
$S_{I,J}^{(k)}(n)$. Then we give a systematic way to produce a cross-bifix-free code $U_{I,J}^{(t)}(n)$ and we finally show that
$S_{I,J}^{(k)}(n)\bigcup U^{(t)}_{I,J}(n)$ is a non-expandable
cross-bifix-free code whenever $S_{I,J}^{(k)}(n)$ is expandable.
In Section 4, we derive an enumerative formula for the size of
$S_{I,J}^{(k)}(n)\bigcup U^{(t)}_{I,J}(n)$.
We conclude this paper with remarks and some possible future works in Section 5.

\section{Preliminaries}
In this section, we  review the notion of  cross-bifix-free codes and  represent the classical family of  cross-bifix-free codes
$S^{(k)}_{q}(n)$ as mentioned in the first section.
We first fix the notation that will be used throughout the paper.
As usual, let $\mathbb{Z}_{q}=\{0,1,2,\cdots,q-1\}$ where $q>1$ is a positive integer.
The cardinality (equivalently the size) of a finite set $S$ is denoted by $|S|$ and
the empty set is denoted by $\emptyset$.
A consecutive sequence of $m$ elements $b\in\mathbb{Z}_q$ is denoted by the short form $b^m$. For example, the vector (or called word)
$(1,1,0,0,0,1,0,0)$ can be expressed as a short form $(1^2,0^3,1,0^2)$. If $m=0$, then $b^m$ is used to denote the absence of  the element $b$.
For a vector (or string) $\alpha$, we also use the notation $|\alpha|$ to denote the length of the vector.

Let $\mathbb{Z}_q^n$ be the set of all $n$-tuples whose coordinates belong to $\mathbb{Z}_q$.
A $q$-ary {\em code} $C$ of length $n$  is simply defined as a non-empty subset of $\mathbb{Z}_q^n$.
At this time, the vectors belonging to $C$ are also called {\em codewords}.
The vector $(x_1,x_2,\cdots,x_r)\in\mathbb{Z}_{q}^{r}$ is called {\em $C$-free} if $r<n$, or if $r\geq n$, $(x_i,x_{i+1},\cdots,x_{i+n-1})\notin C$ holds for all $i=1,2,\cdots,r-n+1$
(in other words, there is no $n$-consecutive coordinates in $(x_1,x_2,\cdots,x_r)$ which forms an $n$-tuple belonging to $C$). For example, taking $q=2,r=5$ and $C=\{(0,0,0,0)\}$, the vector $(1,0,0,0,1)$ is $0^4$-free , but it is not $0^3$-free.

Before defining the notion of  cross-bifix-free codes, we need the following definitions.
\begin{Definition}
Let $n,q$ be integers strictly greater than $1$. For a vector $\mathbf{u}\in\mathbb{Z}_{q}^n$, the set of prefixes of $\mathbf{u}$ is
defined as ${\rm Pre}(\mathbf{u})=\{(u_1,u_2,\cdots,u_i)\,|\,1\leq i\leq n-1\}$ and the set of suffixes of $\mathbf{u}$ is defined as ${\rm Suf}(\mathbf{u})=\{(u_i,u_{i+1},\cdots,u_n)\,|\,2\leq i\leq n\}.$
\end{Definition}
For example,   the set of prefixes of $(00101)$ is ${\rm Pre}(00101)=\{0,00,001,0010\}$ and the set of suffixes of $(00101)$ is ${\rm Suf}(00101)=\{1,01,101,0101\}$.

\begin{Definition}\label{bifix-free}
A vector $\mathbf{u}\in\mathbb{Z}_{q}^n$ is called bifix-free if ${\rm Pre}(\mathbf{u})\bigcap {\rm Suf}(\mathbf{u})=\emptyset$.
Namely, a vector
$\mathbf{u} \in \mathbb{Z}_{q}^{n}$ is bifix-free if it cannot be factorized as $\mathbf{u}=\alpha\omega \alpha$, where $\alpha$ is necessarily a non-empty string, whereas $\omega$ can  be empty.
\end{Definition}
For example, the binary vectors $(0001001)$ and $(0001101)$ are both bifix-free; however, $(0010001)$ is not bifix-free since $001\in \mbox{Pre$(0010001)$}\bigcap \mbox{Suf$(0010001)$}$.
We have ready to present the definition of a cross-bifix-free code.
\begin{Definition}
A code $C\subseteq \mathbb{Z}_{q}^n$ is said to be cross-bifix-free   (or non-overlapping) if the code $C$ satisfies
${\rm Pre}(\mathbf{u})\bigcap {\rm Suf}(\mathbf{v})=\emptyset$ for all $\mathbf{u},\mathbf{v}\in C$ ($\mathbf{u}$ and $\mathbf{v}$ can be identical).
\end{Definition}
For example, the code $\{0001001,0001101\}$ is  cross-bifix-free,
while the code $\{0001001,0010001\}$ is not  cross-bifix-free  since $0001\in \mbox{Pre$(0001001)$}\bigcap \mbox{Suf$(0010001)$}$.
By  the very definition of a cross-bifix-free code, the prerequisite for $C\subseteq \mathbb{Z}_{q}^n$ to be a cross-bifix-free code is that each codeword in $C$ is bifix-free.

We now restate the classical  construction of the cross-bifix-free code $S_{q}^{(k)}(n)$, which was studied extensively in
\cite{Chee,V.I.1,V.I.2,Gilbert},

\begin{Construction}\label{S1}
Let $n>1,q>1$ be  integers  and $1\leq k\leq n-1$. Denote by $S_{q}^{(k)}(n)$ the set of all vectors $(s_1,s_2,\cdots,s_n)\in\mathbb{Z}_{q}^n$ that satisfies the following two conditions:
\begin{itemize}
\item[$(1)$] $s_1=s_2=\cdots=s_k=0$, $s_{k+1}\neq0$  and $s_n\neq0$;

\item[$(2)$] the subsequence $(s_{k+2},s_{k+3},\cdots,s_{n-1})$ is $0^k$-free.
\end{itemize}
Then $S_{q}^{(k)}(n)$ is a cross-bifix-free code.
\end{Construction}

Recently, Wang and Wang \cite{Wang} generalized $S_{q}^{(k)}(n)$ to $S_{I,J}^{(k)}(n)$.
More precisely, the sets $I$ and $J$ are called a {\em bipartition} of $\mathbb{Z}_{q}$ if $I\bigcup J=\mathbb{Z}_{q}$ and $I\bigcap J=\emptyset$, where $I, J$ are non-empty sets.

\begin{Construction}\label{S2}
Let $n>1,q>1$ be integers  and   $1\leq k\leq n-1$. Let  $I,J$ form a bipartition of $\mathbb{Z}_{q}$. Denote by $S_{I,J}^{(k)}(n)$ the set of all vectors
$(s_1,s_2,\cdots,s_n)\in\mathbb{Z}_{q}^n$ that satisfies the following two conditions:

\begin{itemize}
\item[$(1)$] $(s_1,s_2,\cdots,s_k)\in I^k$,   $s_{k+1}\in J $ and $s_n\in J$;

\item[$(2)$] the subsequence $(s_{k+2},s_{k+3},\cdots,s_{n-1})$ is $I^k$-free.

\end{itemize}
Then $S_{I,J}^{(k)}(n)$ is a cross-bifix-free code.

\end{Construction}

Clearly,
taking $I=\{0\}$ and $J=\{1,\cdots,q-1\}=[q-1]$ in Construction \ref{S2}, one has $S_{\{0\},{[q-1]}}^{(k)}(n)=S_{q}^{(k)}(n)$,
and thus Construction \ref{S1} can be viewed as a special case of Construction \ref{S2}.
A $q$-ary cross-bifix-free  code $C\subseteq \mathbb{Z}_q^{n}$ is said to be {\em non-expandable} if $C\bigcup\{\mathbf{x}\}$ is
no longer a cross-bifix-free code for each $\mathbf{x}\in \mathbb{Z}_q^{n}\setminus C$. Otherwise, it is said to be {\em expandable}.
Chee et al. \cite[Theorem 3.1]{Chee} showed that   $S_{q}^{(k)}(n)$ is
a non-expandable cross-bifix-free code for certain suitable values of  $k$.
Wang and Wang corrected   \cite[Theorem 3.1]{Chee}, and
more generally, they have the following result \cite[Theorem 1]{Wang}.
\begin{lem}\label{non-expandable1}
Let notation be the same as before. The $q$-ary cross-bifix-free code $S_{I,J}^{(k)}(n)$
given as in Construction \ref{S2} is non-expandable if $k=n-1$ or $1\leq k<n/2$.
\end{lem}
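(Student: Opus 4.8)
The plan is to verify the definition of non-expandability head on: I fix an arbitrary word $\mathbf{x}=(x_1,\dots,x_n)\in\mathbb{Z}_q^n\setminus S_{I,J}^{(k)}(n)$ and produce a \emph{cross-overlap} that prevents $S_{I,J}^{(k)}(n)\cup\{\mathbf{x}\}$ from being cross-bifix-free, namely either a nonempty word that is simultaneously a prefix of $\mathbf{x}$ and a suffix of some codeword $\mathbf{v}$, or one that is simultaneously a prefix of some codeword $\mathbf{v}$ and a suffix of $\mathbf{x}$. If $\mathbf{x}$ is not bifix-free we are already done, since then $\mathrm{Pre}(\mathbf{x})\cap\mathrm{Suf}(\mathbf{x})\neq\emptyset$; so I may assume $\mathbf{x}$ is bifix-free. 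I would first dispose of the boundary symbols. Since every codeword begins with a symbol of $I$ and ends with a symbol of $J$, and since $I,J\neq\emptyset$ allow me to prescribe the first, respectively last, coordinate of a codeword freely, the length-one prefix $(x_1)$ is a suffix of some codeword whenever $x_1\in J$, and the length-one suffix $(x_n)$ is a prefix of some codeword whenever $x_n\in I$. This reduces both regimes $k=n-1$ and $1\le k<n/2$ to the single main case $x_1\in I$ and $x_n\in J$.

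When $k=n-1$ the code is simply $S_{I,J}^{(n-1)}(n)=I^{n-1}\times J$, whose length-$\ell$ suffixes are exactly the words consisting of $\ell-1$ symbols of $I$ followed by one symbol of $J$, for $1\le \ell\le n-1$. Letting $p$ be the first index with $x_p\in J$ (so $2\le p\le n-1$, the upper bound because $\mathbf{x}\notin S_{I,J}^{(n-1)}(n)$), the prefix $(x_1,\dots,x_p)$ has $x_1,\dots,x_{p-1}\in I$ and $x_p\in J$, so it is such a suffix, and this settles the case $k=n-1$.

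The heart of the argument is the case $1\le k<n/2$, equivalently $2k\le n-1$, still with $x_1\in I$, $x_n\in J$, and $p\ge 2$ the first index with $x_p\in J$. If $p\le k$, the initial run of $I$-symbols is short, and I would show the prefix $(x_1,\dots,x_p)$ is a suffix of a codeword $\mathbf{v}$ built with $v_n=x_p\in J$, with $v_{n-p+1},\dots,v_{n-1}=x_1,\dots,x_{p-1}\in I$, with $v_1,\dots,v_k\in I$, and with $v_{k+1},\dots,v_{n-p}\in J$; then the only $I$-run inside the middle block $(v_{k+2},\dots,v_{n-1})$ has length $p-1<k$, so $\mathbf{v}$ is a codeword, and $2k\le n-1$ is exactly what guarantees $n-p+1\ge k+2$, so this suffix does not collide with the mandatory block $v_1\cdots v_{k+1}$. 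If instead $p\ge k+1$, I would locate a window of $k$ consecutive $I$-symbols immediately followed by a $J$-symbol: when $p\ge k+2$ the initial run $x_1,\dots,x_{p-1}$ has length $\ge k+1$, so positions $p-k,\dots,p-1$ together with $x_p\in J$ form such a window; when $p=k+1$, condition $(1)$ of Construction \ref{S2} holds, so $\mathbf{x}\notin S_{I,J}^{(k)}(n)$ forces condition $(2)$ to fail, i.e. the middle $(x_{k+2},\dots,x_{n-1})$ contains a run of $k$ consecutive $I$-symbols whose maximal run ends before position $n$ (because $x_n\in J$) and is therefore followed by a $J$-symbol, again producing a window. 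Taking $r$ to be the \emph{largest} index with $x_{r-k},\dots,x_{r-1}\in I$ and $x_r\in J$, one checks $r\ge k+2$, so $\beta:=(x_{r-k},\dots,x_n)$ is a genuine suffix of $\mathbf{x}$ of length at most $n-1$, consisting of $k$ symbols of $I$, then $x_r\in J$, then the tail $(x_{r+1},\dots,x_n)$, which meets exactly the three requirements to be a prefix of some codeword.

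The step I expect to be the main obstacle is the $p\ge k+1$ analysis, specifically the proof that the tail $(x_{r+1},\dots,x_n)$ after the rightmost window is $I^k$-free: this is precisely what makes $\beta$ a legitimate codeword-prefix, and it is exactly where the maximality of $r$ is indispensable, since a run of $k$ consecutive $I$'s in the tail would, together with its following $J$-symbol, produce a strictly larger valid index, a contradiction. A secondary, purely bookkeeping obstacle is tracking the position and length inequalities---most importantly $r-k\ge 2$ and $|\beta|\le n-1$ in the $p\ge k+1$ case, and $n-p+1\ge k+2$ in the $p\le k$ case---all of which rest on the hypothesis $2k\le n-1$. This is also the very point that will fail for $n/2\le k\le n-2$, which is what one expects, given the sharpness announced in the introduction.
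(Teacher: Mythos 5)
Your proposal cannot be compared against an in-paper proof, because the paper does not prove Lemma \ref{non-expandable1} at all: it is quoted as known from Wang and Wang \cite{Wang} (their Theorem 1). Judged on its own merits, your argument is correct, and I verified the delicate points. The boundary reduction to $x_1\in I$, $x_n\in J$ is sound, since every element of $J$ (resp.\ $I$) occurs as the last (resp.\ first) coordinate of some codeword; the description of $\mathrm{Suf}(I^{n-1}\times J)$ settles $k=n-1$; in the regime $2k\le n-1$ with $p\le k$, the chain $p\le k\le n-k-1$ is exactly what makes your completion $I^k\times J^{\,n-p-k}\times\{(x_1,\dots,x_p)\}$ consist of legitimate codewords, because the copied $I$-run of length $p-1<k$ lands entirely inside the zone where only $I^k$-freeness is required; and in the regime $p\ge k+1$, the window exists (for $p=k+1$ this uses that condition $(1)$ of Construction \ref{S2} holds, so $\mathbf{x}\notin S_{I,J}^{(k)}(n)$ forces condition $(2)$ to fail, and $x_n\in J$ lets you extend the offending $I$-run rightward until it meets a $J$-symbol inside the word), the maximal index $r$ indeed satisfies $r\ge k+2$, so $\beta=(x_{r-k},\dots,x_n)$ is a genuine suffix of length at most $n-1$, and maximality of $r$ correctly yields that $(x_{r+1},\dots,x_n)$ is $I^k$-free, after which appending $J$-symbols completes $\beta$ to a codeword (appending $J$'s can create no new $I$-run). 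One cosmetic remark: your bifix-freeness assumption on $\mathbf{x}$ is never actually used, so it can be dropped. Finally, it is worth noting that your technique---reduce to $x_1\in I$, $x_n\in J$, then case on the initial $I$-run and locate the rightmost $I^k$-window followed by a $J$-symbol---is the same style of argument the paper itself deploys later when proving that $S_{I,J}^{(k)}(n)\bigcup U_{I,J}^{(t)}(n)$ is non-expandable, so your proof fits naturally alongside the paper's own methods.
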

The staring point of this paper is based on Lemma \ref{non-expandable1}.
We aim to know whether or not the conditions on $k$ listed in Lemma  \ref{non-expandable1} are necessary and sufficient; if yes,
can one find a systematic way to expand $S_{I,J}^{(k)}(n)$ to   a non-expandable one in the case where $n/2\leq k\leq n-2$?
In the next section, we give  affirmative answers to these questions.

\section{Construct new  cross-bifix-free codes and expand  $S_{I,J}^{(k)}(n)$}
The objective of this section is twofold. First, we show that the conditions on $k$ listed in Lemma  \ref{non-expandable1} are
not only sufficient but also necessary. Second, we construct a new family of cross-bifix-free codes
\big(denoted by $U_{I,J}^{(t)}(n)$\big) such that
$S_{I,J}^{(k)}(n)\bigcup U_{I,J}^{(t)}(n)$ is a non-expandable cross-bifix-free code whenever $S_{I,J}^{(k)}(n)$ is expandable.

We first establish the following lemma, which plays an important role in achieving our first goal.
Note, in the case $n/2\leq k\leq n-2$,  that it is reasonable for us to restrict  $n\geq 5$
(we do not consider the senseless cases $n=1,2,3$; if $n=4$, then $k=2$, and one can easily check that $S_{I,J}^{(2)}(4)$ is non-expandable
for any $q>1$ and $I,J$ being a bipartition of $\mathbb{Z}_q$). For convenience, following the terminology in \cite{Wang}, we say that
$\textbf{x}=(x_1,\cdots,x_n)\in \mathbb{Z}_q^n$ overlaps with $S_{I,J}^{(k)}(n)$ if
$S_{I,J}^{(k)}(n)\bigcup\{\textbf{x}\}$ is no longer a cross-bifix-free code.

\begin{lem}\label{theorem1}
Let notation be the same as before.
The $q$-ary cross-bifix-free code $S_{I,J}^{(k)}(n)$ given by Construction \ref{S2} is expandable if $n\geq 5$ and $n/2\leq k\leq n-2$.
\end{lem}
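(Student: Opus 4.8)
The plan is to prove expandability constructively: for each pair $(n,k)$ with $n\ge 5$ and $n/2\le k\le n-2$, I will exhibit a single word $\mathbf{x}\in\mathbb{Z}_q^n\setminus S_{I,J}^{(k)}(n)$ for which $S_{I,J}^{(k)}(n)\cup\{\mathbf{x}\}$ is still cross-bifix-free; producing such an $\mathbf{x}$ shows at once that $S_{I,J}^{(k)}(n)$ is expandable. By definition this requires three things: that $\mathbf{x}$ be bifix-free, that no prefix of $\mathbf{x}$ be a suffix of any codeword, and that no suffix of $\mathbf{x}$ be a prefix of any codeword. A useful preliminary observation is that, because $k\ge n/2$, the middle block $(s_{k+2},\dots,s_{n-1})$ has length $n-k-2<k$, so condition $(2)$ of Construction~\ref{S2} is vacuous: a codeword is exactly a word lying in $I^k$ on its first $k$ coordinates, in $J$ at positions $k+1$ and $n$, and ranging freely over $\mathbb{Z}_q$ on the $n-k-2$ coordinates in between. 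Since those middle coordinates are free, whether a word overlaps with $S_{I,J}^{(k)}(n)$ depends only on the $I/J$-membership pattern of its coordinates. Hence I may fix $a\in I$ and $b\in J$ and build $\mathbf{x}$ over the two-letter alphabet $\{a,b\}$; this makes the construction uniform in $q,I,J$ and in particular covers the hardest binary case $|I|=|J|=1$.

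Working out $\mathrm{Pre}(S_{I,J}^{(k)}(n))$ and $\mathrm{Suf}(S_{I,J}^{(k)}(n))$ from this description, I will translate the two overlap conditions into explicit positional constraints on $\mathbf{x}=(x_1,\dots,x_n)$. Writing $m=n-k-1$ (so $1\le m\le k-1$), the length-$\le m$ suffixes of codewords, which are all words ending in a symbol of $J$, force $x_1=\cdots=x_m=a$; the length-$1$ prefixes of codewords, namely the symbols of $I$, force $x_n=b$; and a short case analysis on the position of the first $J$-symbol of $\mathbf{x}$ shows that the only remaining constraints are $x_{m+1}=b$ and $x_{2m+1}=a$, every other potential overlap being automatically blocked because the lone $J$-symbol at position $m+1$ interrupts any would-be $I^k$-block inside a prefix or suffix of $\mathbf{x}$. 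Under these constraints $\mathbf{x}$ automatically lies outside $S_{I,J}^{(k)}(n)$, since its initial $I$-run has length $m<k$. A concrete choice meeting all the constraints is $\mathbf{x}=(a^m,b,a^m,b^{\,k-m})$ when $k>n/2$, and $\mathbf{x}=(a^{k-1},b^{k-1},a,b)$ when $k=n/2$.

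The remaining and genuinely delicate step is to verify that the chosen $\mathbf{x}$ is bifix-free, i.e.\ unbordered, and I expect this to be the main obstacle. The overlap constraints pull $\mathbf{x}$ toward a nearly periodic shape, and a periodic word of the form $(a^m b)^r$ carries the border $(a^m b)^{r-1}$, so avoiding every border requires breaking this periodicity. I will check unborderedness by the standard argument: a border of length $\ell$ would force $x_\ell=b$ and $x_{n-\ell+1}=a$, which pins $\ell$ to a short list of admissible values, and comparing the trailing run of $b$'s (of length $k-m$ in the word, versus the shorter run forced by the matching prefix) rules each value out. The case $k=n/2$ must be treated separately precisely because the natural candidate $a^m b\,a^m b$ does acquire the border $a^m b$; this is why the alternative word $a^{k-1}b^{k-1}ab$ is used there, and verifying that the latter is unbordered needs $k\ge 3$, which matches the hypothesis $n\ge 5$ and explains why $n=4$, where the analogue $abab$ is bordered, is genuinely non-expandable.
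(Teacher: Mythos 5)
Your proposal is correct, and it follows the same top-level strategy as the paper: exhibit one explicit word $\mathbf{x}\in\mathbb{Z}_q^n\setminus S_{I,J}^{(k)}(n)$ and verify, by cases on lengths, that no prefix or suffix of $\mathbf{x}$ coincides with a suffix or prefix of a codeword and that $\mathbf{x}$ is bifix-free. The executions differ in instructive ways, though. The paper's witness family $I^\ell\times J^2\times\mathbb{Z}_q^{n-k-3}\times I\times J^{k-\ell}$ (with $n-k-1\leq\ell<k$) makes bifix-freeness free of charge, because the family sits inside $S_{I,J}^{(\ell)}(n)$, which is itself cross-bifix-free by Construction \ref{S2}; its cost is that the exponent $n-k-3$ is negative at $k=n-2$, forcing a separate case there. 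Your single two-letter witness $a^{m}b\,a^{m}b^{k-m}$ (with $m=n-k-1$) covers all $k>n/2$ uniformly, including $k=n-2$, but it becomes the bordered word $(a^{m}b)^2$ exactly at $k=n/2$, forcing your separate witness $a^{k-1}b^{k-1}ab$ there (a case the paper's family handles without special treatment), and you must prove unborderedness by hand; your trailing-$b$-run argument does this correctly for both words. Your preliminary reduction---condition $(2)$ of Construction \ref{S2} is vacuous for $k\geq n/2$, so overlapping depends only on $I/J$-patterns and one may work over a two-letter alphabet---has no counterpart in the paper, and it yields a constraint list ($x_1,\dots,x_m\in I$, $x_{m+1}\in J$, $x_{2m+1}\in I$, $x_n\in J$) that is indeed sufficient for all the overlap conditions, uniformly in $q$, $I$, $J$. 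One caveat of phrasing: those constraints are sufficient but not all literally forced; for instance a word whose first $J$-symbol sits at position $m+2$, such as $a^{m+1}b\,a^{m}b^{k-m-1}$ when $2k\geq n+2$, can also be appended, so ``the only remaining constraints'' overstates necessity. Since your argument uses only sufficiency, nothing breaks.
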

\begin{proof}
Our ultimate goal is to find  a suitable vector $\textbf{x}=(x_1,\cdots,x_n)\in \mathbb{Z}_q^n\setminus S_{I,J}^{(k)}(n)$
such that $S_{I,J}^{(k)}(n)\bigcup\{\textbf{x}\}$ is still a cross-bifix-free code in the case where $n\geq 5$ and $n/2\leq k\leq n-2$.
By Construction \ref{S2}, if a $q$-ary vector $\textbf{x}=(x_1,\cdots,x_n)$
begins with $x_1\in{J}$ then $\textbf{x}$ overlaps with   $S_{I,J}^{(k)}(n)$, since the $1$-length prefix of  $\textbf{x}$ is also a
$1$-length suffix of some codeword $\textbf{y} \in{S_{I,J}^{(k)}(n)}$.
Similarly, we cannot expand any vector ending with $x_{n}\in{I}$ to $S_{I,J}^{(k)}(n)$.
Therefore, we only need to consider $\textbf{x}\in ({I\times\mathbb{Z}_{q}^{n-2}\times J})\setminus{S_{I,J}^{(k)} (n)}$. We consider the following two cases separately,
 according to $n/2\leq k<n-2$ or $k=n-2$.

{\bf Case 1)} $n/2\leq k< n-2$.
Note that if $n=5$, then there is no integer $k$ satisfying $n/2\leq k< n-2$. Thus, in this case $n\geq6$ by our assumption $n\geq5$.
Take a specific  vector $\textbf{x}\in I^{\ell}\times J^2 \times \mathbb{Z}_{q}^{n-k-3} \times I \times J^{k-\ell}$, where
$\ell$ is a positive integer satisfying $n-k-1 \leq \ell<k<n-2$.
Clearly, we have $\textbf{x}=(x_1,\cdots,x_n)\in \mathbb{Z}_q^n\setminus S_{I,J}^{(k)}(n)$.
First we claim that $\textbf{x}$ is
bifix-free (see Definition \ref{bifix-free}), and then we assert that   $\textbf{x}$ can be expanded to $S_{I,J}^{(k)}(n)$
preserving that $S_{I,J}^{(k)}(n)\bigcup\{\textbf{x}\}$ is still a cross-bifix-free code.

The first claim is easy to see:
Since $n-k-1 \leq \ell<k<n-2$,
we have $I^{\ell}\times J^2 \times \mathbb{Z}_{q}^{n-k-3} \times I \times J^{k-\ell}\subseteq S_{I,J}^{(\ell)}(n)$.
By Construction \ref{S2}, $S_{I,J}^{(\ell)}(n)$ is a cross-bifix-free code. It follows that  $I^{\ell}\times J^2 \times \mathbb{Z}_{q}^{n-k-3} \times I \times J^{k-\ell}$ is cross-bifix-free. Therefore, $\textbf{x}\in I^{\ell}\times J^2 \times \mathbb{Z}_{q}^{n-k-3} \times I \times J^{k-\ell}$
is  bifix-free, as claimed.

We continue to show that  $S_{I,J}^{(k)}(n)\bigcup\{\textbf{x}\}$ is  a cross-bifix-free code. To this end,
let the vector $\textbf{x}\in I^{\ell}\times J^2 \times \mathbb{Z}_{q}^{n-k-3} \times I \times J^{k-\ell}$ $(n-k-1 \leq \ell<k <n-2)$
be written as $\textbf{x}=\alpha\beta$, where $\alpha$ and $\beta$ are non-empty strings.
Take a typical element  $\textbf{v}\in S_{I,J}^{(k)}(n)$ and read $\textbf{v}$ as $\textbf{v}=\alpha{'}\beta{'}$,
where $\alpha{'}$ and $\beta{'}$ are non-empty strings with $|\alpha|=|\beta{'}|$.

If $|\alpha|=|\beta{'}|=j\leq \ell$, then $\alpha\in I^{j}$ and
the last coordinate of $\beta{'}$ belongs to $J$, giving $\alpha\neq \beta{'}$.

If $\ell<|\alpha|=|\beta{'}|=j<n-k+\ell$,
then
$\alpha\in I^{\ell}\times J~(j=\ell+1)$ or $\alpha\in I^{\ell}\times J^2 \times \mathbb{Z}_{q}^{j-\ell-2}~(\ell+2\leq j<n-k+\ell)$
and
$\beta{'}\in I^{j-(n-k)} \times J \times \mathbb{Z}_{q}^{n-k-2} \times J$ (here we use the assumption $j>\ell\geq n-k-1$).
As $j-(n-k)<\ell$, we get $\alpha\neq \beta{'}$.

If $|\alpha|=|\beta{'}|=j=n-k+\ell$, then $\alpha\in I^{\ell}\times J^2\times \mathbb{Z}_{q}^{n-k-3}\times I$
and $\beta{'}\in I^{\ell} \times J \times \mathbb{Z}_{q}^{n-k-2} \times J$, leading to
$\alpha\neq \beta{'}$.

If $n-k+\ell<|\alpha|=|\beta{'}|=j\leq n-1$, then $\alpha\in I^{\ell}\times J^2 \times \mathbb{Z}_{q}^{n-k-3}\times I\times J^{j-\ell-(n-k)}$
and
$\beta{'}\in I^{j-(n-k)} \times J \times \mathbb{Z}_{q}^{n-k-2} \times J$. Since $j-(n-k)>\ell$, we have $\alpha\neq \beta{'}$.

Therefore, we see that $\rm{Pre}(\textbf{x})\cup \rm{Suf}(\textbf{v})=\emptyset$.
The reasoning for $\rm{Pre}(\textbf{v})\cup \rm{Suf}(\textbf{x})=\emptyset$ where $\textbf{v}\in S_{I,J}^{(k)}(n)$
with $n/2\leq k< n-2$ is quite analogous to the just considered case $\rm{Pre}(\textbf{x})\cup \rm{Suf}(\textbf{v})=\emptyset$, by considering $\textbf{x}=\alpha\beta$ and $\textbf{v}=\alpha{'}\beta{'}$ and comparing the  prefix $\alpha{'}$ of $\textbf{v}$ and the suffix $\beta$ of $\textbf{x}$ when $|\alpha{'}|=|\beta|$.
Therefore, we conclude that $\textbf{x}$ can be appended to $S_{I,J}^{(k)}(n)$, namely $S_{I,J}^{(k)}(n)\bigcup\{\textbf{x}\}$ is  a cross-bifix-free code in this case.

{\bf Case 2)}  $k=n-2$. Take a specific element $\mathbf{x}\in I^2\times J \times I \times J^{n-4} $.
In this case, $S_{I,J}^{(k)}(n)=I^{n-2}\times J^2$. By using essentially  the same arguments as in the proof of {\bf Case 1)}, it is easy to show that $\textbf{x}$ can be appended to $S_{I,J}^{(n-2)}(n)$.

Therefore,  we can always  find an appropriate vector  $\textbf{x}=(x_1,\cdots,x_n)\in \mathbb{Z}_q^n\setminus S_{I,J}^{(k)}(n)$  appending to $S_{I,J}^{(k)}(n)$ such that $S_{I,J}^{(k)}(n)\bigcup\{\mathbf{x}\}$ is still a  cross-bifix-free code.  We are done.
\end{proof}

Lemma \ref{non-expandable1} says that  the $q$-ary   code $S_{I,J}^{(k)}(n)$
given as in  Construction \ref{S2} is non-expandable if $k=n-1$ or $1\leq k<n/2$.
By virtue of Lemma \ref{theorem1}, we improve Lemma \ref{non-expandable1} by showing that
$S_{I,J}^{(k)}(n)$ is non-expandable if and only if  $k=n-1$ or $1\leq k<n/2$, as formally stated below.

\begin{Theorem}\label{theorem2}
Let $n\geq5$ be a positive integer.
The $q$-ary cross-bifix-free code $S_{I,J}^{(k)}(n)$ given as in  Construction \ref{S2} is non-expandable if and only if $k=n-1$ or $1\leq k<n/2$.
\end{Theorem}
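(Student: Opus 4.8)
The plan is to obtain Theorem~\ref{theorem2} as a direct synthesis of the two results already established, namely Lemma~\ref{non-expandable1} and Lemma~\ref{theorem1}; the only genuine content is a clean case split over the admissible range of $k$, not any new combinatorial construction. Recall that Construction~\ref{S2} forces $1\leq k\leq n-1$. I would first observe that this interval decomposes into three mutually disjoint and jointly exhaustive pieces: the low range $1\leq k<n/2$, the middle band $n/2\leq k\leq n-2$, and the single endpoint $k=n-1$. Disjointness and exhaustiveness are immediate, since the first two pieces are separated by the condition $k<n/2$ versus $k\geq n/2$, while $k=n-1$ lies strictly above $n-2$ and hence outside both of the others.

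For the sufficiency direction ($\Leftarrow$), I would simply invoke Lemma~\ref{non-expandable1}: if $k=n-1$ or $1\leq k<n/2$, then $S_{I,J}^{(k)}(n)$ is non-expandable, with no additional argument required.

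For the necessity direction ($\Rightarrow$), I would argue by contraposition. Assuming that neither $k=n-1$ nor $1\leq k<n/2$ holds, the partition above forces $k$ into the middle band $n/2\leq k\leq n-2$. Because $n\geq 5$ by hypothesis, Lemma~\ref{theorem1} applies and produces a vector $\mathbf{x}\in\mathbb{Z}_q^n\setminus S_{I,J}^{(k)}(n)$ such that $S_{I,J}^{(k)}(n)\bigcup\{\mathbf{x}\}$ remains cross-bifix-free; thus $S_{I,J}^{(k)}(n)$ is expandable, i.e. not non-expandable. Combining the two directions yields the claimed equivalence.

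I do not expect a hard step here: all of the substantive combinatorics is already packaged inside Lemma~\ref{theorem1}, whose expandability construction I am entitled to assume. The one point deserving care is purely bookkeeping --- verifying that the three subranges exactly tile the admissible interval $1\leq k\leq n-1$, so that the two lemmas together decide every value of $k$ and the equivalence is tight. Once this partition is confirmed, the theorem follows at once.
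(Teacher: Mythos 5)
Your proposal is correct and is essentially identical to the paper's own treatment: the paper states Theorem~\ref{theorem2} as an immediate consequence of Lemma~\ref{non-expandable1} (sufficiency) and Lemma~\ref{theorem1} (expandability for $n/2\leq k\leq n-2$, giving necessity by contraposition), relying on exactly the same trichotomy $1\leq k<n/2$, $n/2\leq k\leq n-2$, $k=n-1$ of the admissible range. Your extra bookkeeping that these three pieces tile $1\leq k\leq n-1$ is the only thing the paper leaves implicit, and it is verified correctly.
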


As $S_{q}^{(k)}(n)$ is a particular subclass of $S_{I,J}^{(k)}(n)$,
  the following corollary gives an improvement of \cite[Theorem 3.1]{Chee}.

\begin{Corollary}
Let $n\geq5$ be a positive integer.
The $q$-ary cross-bifix-free code $S_{q}^{(k)}(n)$ given as in Construction \ref{S1} is non-expandable if and only if $k=n-1$ or $1\leq k<n/2$.
\end{Corollary}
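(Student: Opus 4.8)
The plan is to obtain the corollary as an immediate specialization of Theorem~\ref{theorem2}, with essentially no new work. First I would recall the identification recorded right after Construction~\ref{S2}: taking the bipartition $I=\{0\}$ and $J=[q-1]=\{1,\cdots,q-1\}$ of $\mathbb{Z}_q$ yields $S_{\{0\},[q-1]}^{(k)}(n)=S_{q}^{(k)}(n)$. Since $q>1$, both $\{0\}$ and $[q-1]$ are non-empty and together exhaust $\mathbb{Z}_q$, so $(\{0\},[q-1])$ is a legitimate bipartition and Construction~\ref{S2} does apply to it. Thus $S_{q}^{(k)}(n)$ is literally one of the codes $S_{I,J}^{(k)}(n)$.

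The only point worth isolating is that non-expandability is an intrinsic property of a code regarded as a subset of $\mathbb{Z}_q^n$. By definition, a cross-bifix-free code $C\subseteq\mathbb{Z}_q^n$ is non-expandable exactly when $C\bigcup\{\mathbf{x}\}$ fails to be cross-bifix-free for every $\mathbf{x}\in\mathbb{Z}_q^n\setminus C$; this criterion refers only to the ambient space $\mathbb{Z}_q^n$ and to the set $C$ itself, and not to any particular way of describing $C$ via a bipartition. Consequently, asking whether $S_{q}^{(k)}(n)$ is non-expandable is the very same question as asking whether $S_{\{0\},[q-1]}^{(k)}(n)$ is non-expandable, because these are equal as subsets of $\mathbb{Z}_q^n$.

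Combining these two observations, I would simply invoke Theorem~\ref{theorem2} with $I=\{0\}$ and $J=[q-1]$. That theorem asserts, uniformly over all admissible bipartitions and all $q>1$ with $n\geq5$, that $S_{I,J}^{(k)}(n)$ is non-expandable if and only if $k=n-1$ or $1\leq k<n/2$; specializing the bipartition gives precisely the claimed equivalence for $S_{q}^{(k)}(n)$. I do not expect any genuine obstacle here: the entire substance of the corollary is already contained in Theorem~\ref{theorem2}, and the proof reduces to the bookkeeping that $(\{0\},[q-1])$ is admissible together with the remark that non-expandability is insensitive to the chosen description of the code. Hence the argument is a one-line specialization.
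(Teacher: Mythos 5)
Your proposal is correct and matches the paper's approach exactly: the paper treats this corollary as an immediate specialization of Theorem~\ref{theorem2}, justified by the remark after Construction~\ref{S2} that $S_{\{0\},[q-1]}^{(k)}(n)=S_{q}^{(k)}(n)$. Your additional observation that non-expandability is an intrinsic property of the code as a subset of $\mathbb{Z}_q^n$ is a sensible (if unstated in the paper) bookkeeping point, and nothing more is needed.
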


Equivalently, Theorem \ref{theorem2} tells us that  $S_{I,J}^{(k)}(n)$ is  expandable precisely when   $n/2\leq k\leq n-2.$
For $5\leq n\leq 6$, it is easy to   verify that $S_{I,J}^{(3)}(5)\cup I^2\times J\times I\times J$,
$S_{I,J}^{(3)}(6)\cup I^2\times J^2\times I\times J$
and $S_{I,J}^{(4)}(6)\cup I\times J\times I\times \mathbb{Z}_{q} \times J^2$ are non-expandable cross-bifix-free codes.
We now turn to consider the problem that how to  expand $S_{I,J}^{(k)}(n)$ to a non-expandable cross-bifix-free code in the case where
 $n\geq 7$ and
$n/2\leq k\leq n-2$. For this purpose, we introduce the following definition.
\begin{Construction}\label{V}
Let  $n\geq 7$ and $q>1$ be integers.
Let  $I$ and $J$ form a bipartition of $\mathbb{Z}_{q}$. Let $k,t$ and $m$ be integers satisfying $n/2\leq k\leq n-2$,   $t=max\{2,n-k-1\}$
and $t+1 \leq m\leq n$ with $m\neq n-k+t$.
For $m=t+1$, define $V_{I,J}^{(t)}(m)=I^t\times J$;  for $t+2\leq m<n-k+t$,
define $V_{I,J}^{(t)}(m)=I^t\times J\times \mathbb{Z}_{q}^{m-t-2}\times J$;
for $n-k+t<m\leq n$,  $V_{I,J}^{(t)}(m)$ is defined to be the set of all vectors
$(s_1,s_2,\cdots,s_m)\in\mathbb{Z}_{q}^m$     satisfying the following   property:
$(s_1,s_2,\cdots,s_t)\in I^t$,   $s_{t+1}\in J $,  $s_{n-k+t}\in I$ and $s_m\in J$.
\end{Construction}
At this point, several remarks are in order.
First,  $(s_{t+2},s_{t+3},\cdots,s_{m-1})$ must be $I^k$-free  since $m-t-2\leq n-t-2\leq k-1$.
Second, the cardinality of $V_{I,J}^{(t)}(n)$ is equal to $v_{I,J}^{(t)}(n)=|V_{I,J}^{(t)}(n)|=|I|^{t+1}|J|^2q^{n-t-3}$. Finally, $V_{I,J}^{(t)}(n)$ is, in general, not a cross-bifix-free code (see Example \ref{Example} below).
In order to obtain cross-bifix-free codes, we have to carefully pick out some vectors in  $V_{I,J}^{(t)}(n)$ which leads to the following construction.
\begin{Construction}\label{U}
Let  $n\geq 7$   and $q>1$ be integers.
Let  $I$ and $J$ form a bipartition of $\mathbb{Z}_{q}$. Let $k,t$ and $m$ be integers satisfying $n/2\leq k\leq n-2$,   $t=max\{2,n-k-1\}$
and $t+1 \leq m\leq n$ with $m\neq n-k+t$. Define $U_{I,J}^{(t)}(m)$ to be a subset of $V_{I,J}^{(t)}(m)$
satisfying the following  additional property:
The suffix of $\textbf{s}=(s_1,s_2,\cdots,s_m)\in V_{I,J}^{(t)}(m)$ with length $\ell$ does not belong to $V_{I,J}^{(t)}(\ell)$ for all  $t+1\leq \ell\leq m-(t+1)$ with $\ell\neq n-k+t$.
\end{Construction}
The next lemma says that $U_{I,J}^{(t)}(n)$ is a cross-bifix-free code. In fact, one can similarly  show  that  $U_{I,J}^{(t)}(m)$
is also a cross-bifix-free code; we do not state this more general result here because we merely use the cross-bifix-free property of $U_{I,J}^{(t)}(n)$ later.
\begin{lem}\label{lem1}
For any fixed $n\geq 7$, $n/2\leq k\leq n-2$ and $t=\max\{2,n-k-1\} $, the $q$-ary code $U_{I,J}^{(t)}(n)$ defined as in Construction \ref{U} is a cross-bifix-free code.
\end{lem}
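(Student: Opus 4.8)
The plan is to argue by contradiction. Suppose some $\mathbf{u}=(u_1,\dots,u_n)$ and $\mathbf{v}=(v_1,\dots,v_n)$ in $U_{I,J}^{(t)}(n)$ (allowed to be equal) admit a length-$\ell$ overlap, meaning $(u_1,\dots,u_\ell)=(v_{n-\ell+1},\dots,v_n)$ for some $1\le \ell\le n-1$; I would rule this out for every admissible $\ell$. The first bookkeeping step is to record that $k>t$ (which holds since $k\ge n/2$ and $n\ge 7$), so $n>n-k+t$ and both $\mathbf{u},\mathbf{v}$ lie in the ``large $m$'' branch of Construction \ref{V}. Consequently $(u_1,\dots,u_t),(v_1,\dots,v_t)\in I^t$, $u_{t+1},v_{t+1}\in J$, $u_{n-k+t}\in I$, and $u_n,v_n\in J$. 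These memberships, together with $I\cap J=\emptyset$, are essentially the only facts the argument uses.

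Next I would restrict the range of $\ell$. Since $u_\ell=v_n\in J$ while $u_\ell\in I$ whenever $\ell\le t$, we must have $\ell\ge t+1$. Dually, if $\ell\ge n-t$ then $v_{t+1}$ (which lies in $J$) sits inside the matched suffix and equals some $u_i$ with $1\le i\le t$, forcing $u_i\in I\cap J$; hence $\ell\le n-t-1$. Moreover the value $\ell=n-k+t$ is ruled out directly, since then $u_{n-k+t}=v_n$ would equate an element of $I$ with an element of $J$. One checks that $t\le n/2-1$ (from $n\ge 7$ and $k\ge n/2$), so these boundary regimes together with the middle range exhaust $\{1,\dots,n-1\}$ without gaps.

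The core of the argument is the surviving range $t+1\le \ell\le n-t-1$ with $\ell\neq n-k+t$, which is exactly the range governed by the defining condition of $U_{I,J}^{(t)}(n)$ in Construction \ref{U}. Here I would prove the key claim that the length-$\ell$ prefix $(u_1,\dots,u_\ell)$ always belongs to $V_{I,J}^{(t)}(\ell)$: its first $t$ entries lie in $I$ and its $(t+1)$-st entry in $J$; if $\ell<n-k+t$ the sole remaining requirement $u_\ell\in J$ holds because $u_\ell=v_n\in J$, while if $\ell>n-k+t$ the extra requirements $u_{n-k+t}\in I$ and $u_\ell=v_n\in J$ hold for the same reasons. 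Because the overlap makes this prefix equal to the length-$\ell$ suffix of $\mathbf{v}$, that suffix would then lie in $V_{I,J}^{(t)}(\ell)$ --- in direct violation of the condition imposed in Construction \ref{U}, which forbids precisely this for all $t+1\le \ell\le n-(t+1)$ with $\ell\neq n-k+t$. This contradiction closes the main range.

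Exhausting all $\ell$ yields ${\rm Pre}(\mathbf{u})\cap{\rm Suf}(\mathbf{v})=\emptyset$ for every pair $\mathbf{u},\mathbf{v}\in U_{I,J}^{(t)}(n)$, and the case $\mathbf{u}=\mathbf{v}$ simultaneously gives bifix-freeness of each codeword, so $U_{I,J}^{(t)}(n)$ is cross-bifix-free. I expect the main obstacle to be organizational rather than conceptual: one must track the split of $V_{I,J}^{(t)}(\ell)$ at $\ell=n-k+t$ and apply the correct branch on each side so that the key claim is verified in every subcase, and confirm that the boundary regimes ($\ell\le t$, $\ell\ge n-t$, $\ell=n-k+t$) and the middle range genuinely tile $\{1,\dots,n-1\}$ --- a check that hinges on the inequalities $k\ge n/2$ and $n\ge 7$.
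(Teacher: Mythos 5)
Your proposal is correct and takes essentially the same approach as the paper: the same case split on the overlap length ($\ell\le t$, the middle range $t+1\le\ell\le n-(t+1)$ with $\ell\neq n-k+t$ killed by the suffix-exclusion condition of Construction \ref{U}, the value $\ell=n-k+t$ killed by the $I$-versus-$J$ mismatch, and $\ell\ge n-t$ killed by the position of the $(t+1)$-th coordinate), using the same structural facts from Construction \ref{V}. The only difference is organizational: you argue the general cross-pair case directly (subsuming bifix-freeness as $\mathbf{u}=\mathbf{v}$), whereas the paper writes out bifix-freeness and declares the cross case analogous.
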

\begin{proof}
We first show that each $\textbf{u}\in U_{I,J}^{(t)}(n)$ is bifix-free.
For any given $\textbf{u}\in U_{I,J}^{(t)}(n) $, it can be written as $\textbf{u}=\alpha\omega\beta$, where $\alpha$ and $\beta $ are necessarily non-empty strings, whereas $\omega$ can   be empty.

If $|\alpha|=|\beta|=j\leq t$,
then $\alpha\in I^{j}$, whereas the last coordinate of $\beta$
 belongs to $J$,
giving $\alpha\neq\beta$.

If $t+1\leq|\alpha|=|\beta|=j\leq n-(t+1)$ and $j\neq n-k+t$, we use the proof by contradiction.
Suppose otherwise that $\alpha$ and $\beta$ satisfy $\alpha=\beta$.
Since the last coordinate of $\beta$ is in $J$, then $\alpha= \beta\in V_{I,J}^{(t)}(j)$ contradicts to the hypothesis $\textbf{u}\in U_{I,J}^{(t)}(n)$, forcing $\alpha\neq\beta.$

If $|\alpha|=|\beta|=n-k+t$, then the last coordinate of $\alpha$   belongs to $I$,  whereas the last coordinate of $\beta$
 belongs to $J$. Thus, $\alpha\neq\beta.$

If $n-(t+1)<|\alpha|=|\beta|=j\leq n-1$, then the first $t+1$ coordinates  of $\alpha$ lie in  $I^t\times J$,  whereas
$\beta\in I^{j-(n-t)}\times J\times R(n-t-1),$ where $R(n-t-1)$
is the set of $q$-ary $U_{I,J}^{(t)}(m)$-free words of length
$n-t-1$ that end with an element of $J$ (here, $t+1\leq m\leq n-(t+1)$ and $m\neq t+n-k$).
Noting that $0\leq j-(n-t)\leq t-1$, we have
$\alpha\neq\beta$.

Therefore, we have shown that every $\textbf{u}\in U_{I,J}^{(t)}(n)$ is bifix-free.
The proof for $\textbf{u}$ and $\textbf{u}{'}$ being cross-bifix-free for any $\textbf{u}\neq\textbf{u}{'}\in U_{I,J}^{(t)}(n)$ is quite analogous to the previous one just illustrated. We omit it here.
\end{proof}

We now show that $S_{I,J}^{(k)}(n)\bigcup U_{I,J}^{(t)}(n)$ is a cross-bifix-free code, which is the conclusion of the next result.
\begin{lem}\label{lem2}
For any fixed $n\geq 7$, $n/2\leq k\leq n-2$ and $t=\max\{2,n-k-1\}$,  $S_{I,J}^{(k)}(n)\bigcup U_{I,J}^{(t)}(n)$ is a cross-bifix-free code.
\end{lem}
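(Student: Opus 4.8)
The plan is to reduce the claim to the two ``mixed'' prefix--suffix conditions and then dispose of each by a length-by-length symbol comparison anchored at the landmark coordinates $t$, $k$, $n$ and $n-k+t$. First I would record that $S_{I,J}^{(k)}(n)$ is cross-bifix-free by Construction \ref{S2} and that $U_{I,J}^{(t)}(n)$ is cross-bifix-free by Lemma \ref{lem1}. Hence, to prove that the union is cross-bifix-free it remains to check, for every $1\le j\le n-1$, that the length-$j$ prefix of a codeword from one family never coincides with the length-$j$ suffix of a codeword from the other. There are two sub-cases: (a) the prefix comes from $\mathbf{u}\in S_{I,J}^{(k)}(n)$ and the suffix from $\mathbf{v}\in U_{I,J}^{(t)}(n)$; and (b) the prefix comes from $\mathbf{u}\in U_{I,J}^{(t)}(n)$ and the suffix from $\mathbf{v}\in S_{I,J}^{(k)}(n)$. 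Throughout I would exploit the inequality chain $n-k-1\le t<k$, which follows from $t=\max\{2,n-k-1\}$ together with $n/2\le k\le n-2$; this yields in particular $t+1\ge n-k$ and $n-t-2\le k-1$.

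For sub-case (a) the key structural fact is that no codeword of $V_{I,J}^{(t)}(n)$, hence none of $U_{I,J}^{(t)}(n)$, contains a substring lying in $I^k$: the $J$-entries at positions $t+1$ and $n$ cut every maximal $I$-run into either the leading block of length $t<k$, or an interior block inside positions $t+2,\dots,n-1$ of length at most $n-t-2\le k-1<k$. Granting this, if $j\le k$ the prefix $(u_1,\dots,u_j)$ lies in $I^j$ while the suffix of $\mathbf{v}$ ends in $v_n\in J$, so they differ; and if $k+1\le j\le n-1$, equality would force $v_{n-j+1},\dots,v_{n-j+k}\in I$, an $I^k$ substring of $\mathbf{v}$, which is impossible. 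This settles (a).

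For sub-case (b) I would compare leading $I$-runs. The length-$j$ prefix of $\mathbf{u}\in U_{I,J}^{(t)}(n)$ has leading $I$-run exactly $t$ (positions $1,\dots,t$ in $I$, position $t+1$ in $J$). For $j\le t$ the prefix lies in $I^j$ while the suffix ends in $v_n\in J$, so they differ. For $t+1\le j\le n-1$ the bound $j\ge t+1\ge n-k$ forces the coordinate index $k+1$ of $\mathbf{v}$ into the suffix window, so the suffix's leading $I$-run equals $\max\{0,\,k-n+j\}$; this equals $t$ only when $j=n-k+t$, and for every other $j$ the prefix and suffix first disagree at an $I$-versus-$J$ position (prefix $I$, suffix $J$ when $j<n-k+t$; prefix $J$, suffix $I$ when $j>n-k+t$). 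The one delicate length is the borderline $j=n-k+t$, which is exactly the length excluded in Construction \ref{U}: here both strings share the same leading $I$-run and the same first $J$, so the run comparison is inconclusive. I would break the tie using the landmark $n-k+t$ itself: the final coordinate of the length-$(n-k+t)$ prefix of $\mathbf{u}$ is $u_{n-k+t}\in I$ by the defining condition of $V_{I,J}^{(t)}(n)$, whereas the length-$(n-k+t)$ suffix of $\mathbf{v}$ ends in $v_n\in J$; since $t<k$ guarantees $n-k+t\le n-1$, this comparison is legitimate and the two differ.

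The main obstacle is not any single estimate but the careful bookkeeping of where the pinned entries of each family fall inside the sliding suffix window as $j$ varies, and especially the isolated borderline length $j=n-k+t$, where every ``cheap'' invariant (leading run, position of the first $J$, last symbol in $I$ versus $J$) must be checked in the right order. I would also flag the contrast with Lemma \ref{lem1}: the ``no short suffix in $V_{I,J}^{(t)}(\ell)$'' clause of Construction \ref{U} is what makes $U_{I,J}^{(t)}(n)$ internally cross-bifix-free, but the mixed conditions above appear to follow purely from the positional $I/J$ structure of the two families, namely the avoidance of $I^k$ substrings by $V_{I,J}^{(t)}(n)$ and the entries pinned at $k+1$, $n$ and $n-k+t$.
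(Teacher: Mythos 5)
Your proof is correct, and your sub-case (b) is in substance the paper's own argument: the paper writes $\textbf{s}=\alpha\omega\in S_{I,J}^{(k)}(n)$ and $\textbf{s}'=\omega'\beta'\in U_{I,J}^{(t)}(n)$ and compares the prefix $\omega'$ with the suffix $\omega$ in exactly your four regimes $i\le t$, $t<i<n-k+t$, $i=n-k+t$ and $n-k+t<i\le n-1$, using the same pinned coordinates and the same tie-break at the borderline length $i=n-k+t$ (last coordinate in $I$ for the prefix of the $U$-codeword versus $v_n\in J$ for the suffix of the $S$-codeword); your reformulation via leading $I$-runs is only a repackaging of that case analysis. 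Where you genuinely go beyond the paper is sub-case (a): the paper disposes of the direction ``prefix of an $S$-codeword versus suffix of a $U$-codeword'' with the single phrase ``and vice versa,'' whereas you prove it explicitly by observing that no codeword of $V_{I,J}^{(t)}(n)\supseteq U_{I,J}^{(t)}(n)$ contains an $I^k$ substring, since the $J$'s pinned at positions $t+1$ and $n$ confine every $I$-run to length at most $\max\{t,\,n-t-2\}\le k-1$. That observation is correct (it uses $t\ge n-k-1$ and $t<k$, both consequences of $n/2\le k\le n-2$), and it gives a clean, self-contained treatment of a half of the statement that the paper leaves implicit; the two directions are not symmetric, since the pinned coordinates of the two families sit in different places, so writing this half out is a genuine gain in rigor rather than padding.
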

\begin{proof}
It follows from Lemma \ref{lem1} that $U_{I,J}^{(t)}(n)$ is a cross-bifix-free code.
Construction \ref{S2} tells us that $S_{I,J}^{(k)}(n)$ is a cross-bifix-free code.
To complete the proof we only need to show the cross-bifix-free property between $ S_{I,J}^{(k)}(n)$ and $U_{I,J}^{(t)}(n)$.
Take a typical element  $\textbf{s}\in S_{I,J}^{(k)}(n)$ and it can be written as $\textbf{s}=\alpha\omega$,
where $|\alpha|>0$ and $|\omega|>0$.
Similarly, take a typical element $\textbf{s}{'}\in U_{I,J}^{(t)}(n)$  and it can be written  as $s{'}=\omega{'}\beta{'}$, where $|\omega{'}|>0$ and  $|\beta{'}|>0$.

If $|\omega|=|\omega{'}|=i\leq t$, then $\omega{'}\in I^{i}$ and $\omega \in \mathbb{Z}_q^{i-1} \times J $, yielding $\omega{'}\neq \omega$.

If $t<|\omega|=|\omega{'}|=i<n-k+t$,
then the first $t+1$ coordinates  of $\omega{'}$ lie in  $I^t\times J$, whereas
$\omega \in I^{i-(n-k)}\times J \times\mathbb{Z}_q^{n-k-2}\times J$ (here we use the fact that $i>t\geq n-k-1$).
Since $t>i-(n-k)$,
 we have
$\omega{'}\neq \omega$.

If $|\omega|=|\omega{'}|=i=n-k+t$, then $\omega{'}\in I^{t}\times J \times \cdots \times I$ and $\omega \in I^{t}\times J \times\mathbb{Z}_q^{n-k-2}\times J.$
The last coordinate of $\omega{'}$   belongs to $I$,  whereas the last coordinate of $\omega$
 belongs to $J$.
 Since the last coordinate of $\omega$ is different from that of  $\omega{'}$,
we have $\omega{'}\neq \omega$.

If $n-k+t<|\omega|=|\omega{'}|=i\leq n-1$, then $\omega \in I^{i-(n-k)}\times J \times\mathbb{Z}_q^{n-k-2}\times J$ and
$\omega{'}\in I^{t}\times J \times \mathbb{Z}_q^{i-t-1}$.
It follows that $\omega \in I^{i-(n-k)}\times J \times\mathbb{Z}_q^{n-k-2}\times J$ and
the first $t+1$ coordinates  of $\omega{'}$ lie in  $I^t\times J$.
 Since $t<i-(n-k)$,  we have
$\omega{'}\neq \omega$.

To summarize,
no   prefix of $\textbf{s}{'}$ is a suffix of $\textbf{s}$ and vice versa for any $\textbf{s}{'}\in U_{I,J}^{(t)}(n)$ and
$\textbf{s}\in S_{I,J}^{(k)}(n)$.
Therefore, for any fixed $n\geq 7$, $\frac{n}{2}\leq k\leq n-2$ and $t=\max\{2,n-k-1\}$,  $ S_{I,J}^{(k)}(n)\bigcup U_{I,J}^{(t)}(n)$ is a cross-bifix-free code.
\end{proof}

We remain to show that $S_{I,J}^{(k)}(n)\bigcup U_{I,J}^{(t)}(n)$ is a non-expandable cross-bifix-free code.
As above, fix integers $n,n/2\leq k\leq n-2,t=\max\{2,n-k-1\}$ and $t+1\leq m\leq n-(t+1)$ with $m\neq n-k+t$.
In order to achieve our goal, we need to give an alternative presentation of  Construction \ref{U}.
Let $Q_m(n)$ denote the subset of $V_{I,J}^{(t)}(n)$ such that  the suffix of
$\textbf{s}\in V_{I,J}^{(t)}(n)$ of length $m$ belongs to $V_{I,J}^{(t)}(m)$.
In symbols,
\begin{align*}
Q_m(n)=\big\{\textbf{s}\in V_{I,J}^{(t)}(n)\,\big|\ \mbox{the suffix of  \textbf{s} of length $m$ belongs to $ V_{I,J}^{(t)}(m)$} \big\}.
\end{align*}
Similarly,
let $P_m(n)$ denote the subset of $V_{I,J}^{(t)}(n)$ such that the suffix of
$\textbf{s}\in V_{I,J}^{(t)}(n)$ of length $m$ belongs to $U_{I,J}^{(t)}(m)$, i.e.,
\begin{align*}
P_m(n)=\big\{\textbf{s}\in V_{I,J}^{(t)}(n)\,\big|\, \mbox{the suffix of  $\textbf{s}$  of length $m$ belongs to $ U_{I,J}^{(t)}(m)$}\big\}.
\end{align*}
We have the following result, which turns out to be useful in the proof of the non-expandable property of  $S_{I,J}^{(k)}(n)\bigcup U_{I,J}^{(t)}(n)$ and in the enumeration of the size of  $U_{I,J}^{(t)}(n)$ (see Section \ref{section4}).
\begin{lem}\label{PQ}
Let notation be the same as above.
For any fixed $n\geq 7$, $n/2\leq k\leq n-2$ and $t=max\{2,n-k-1\}$, the following equation holds
\begin{equation*}
\bigcup\limits_{{\scriptsize\begin{array}{c}m=t+1,\\m\neq n-k+t \end{array}}}^{n-(t+1)}Q_m(n)= \bigcup\limits_{{\scriptsize\begin{array}{c}m=t+1,\\m\neq n-k+t \end{array}}}^{n-(t+1)}P_m(n).
\end{equation*}
\end{lem}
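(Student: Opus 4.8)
The plan is to establish the two inclusions separately, with the forward inclusion being immediate and the reverse inclusion carrying all the content. For the easy direction, note that Construction \ref{U} defines $U_{I,J}^{(t)}(m)$ as a \emph{subset} of $V_{I,J}^{(t)}(m)$ for every admissible $m$. Hence if the length-$m$ suffix of a word $\textbf{s}\in V_{I,J}^{(t)}(n)$ lies in $U_{I,J}^{(t)}(m)$, it a fortiori lies in $V_{I,J}^{(t)}(m)$; this shows $P_m(n)\subseteq Q_m(n)$ for each index $m$ in the set $\{t+1,\dots,n-(t+1)\}\setminus\{n-k+t\}$, and taking unions gives $\bigcup_m P_m(n)\subseteq \bigcup_m Q_m(n)$.

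The substance is the reverse inclusion $\bigcup_m Q_m(n)\subseteq \bigcup_m P_m(n)$, which I would prove by a minimal-length (extremal) argument. Fix $\textbf{s}\in\bigcup_m Q_m(n)$ and let $m_0$ be the \emph{smallest} index in $\{t+1,\dots,n-(t+1)\}\setminus\{n-k+t\}$ for which the length-$m_0$ suffix of $\textbf{s}$, call it $\textbf{w}$, belongs to $V_{I,J}^{(t)}(m_0)$; such an $m_0$ exists precisely because $\textbf{s}$ lies in at least one $Q_m(n)$. The claim is that this minimal choice forces $\textbf{w}\in U_{I,J}^{(t)}(m_0)$, which immediately yields $\textbf{s}\in P_{m_0}(n)\subseteq\bigcup_m P_m(n)$ and completes the proof.

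To verify the claim I would check the defining property of Construction \ref{U} directly for $\textbf{w}$, arguing by contradiction. Suppose some length-$\ell$ suffix of $\textbf{w}$, with $t+1\leq \ell\leq m_0-(t+1)$ and $\ell\neq n-k+t$, were to belong to $V_{I,J}^{(t)}(\ell)$. The key observation is that suffixes compose: since $\ell\leq m_0$, the length-$\ell$ suffix of $\textbf{w}$ is exactly the length-$\ell$ suffix of $\textbf{s}$. Thus the length-$\ell$ suffix of $\textbf{s}$ lies in $V_{I,J}^{(t)}(\ell)$, i.e.\ $\textbf{s}\in Q_\ell(n)$. But the chain $t+1\leq \ell\leq m_0-(t+1)<m_0\leq n-(t+1)$ together with $\ell\neq n-k+t$ shows that $\ell$ is a legitimate index in the union that is strictly smaller than $m_0$, contradicting the minimality of $m_0$. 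Hence no such $\ell$ exists, $\textbf{w}$ satisfies the extra condition, and $\textbf{w}\in U_{I,J}^{(t)}(m_0)$. (When $m_0<2t+2$ the set of candidate $\ell$ is empty and the condition holds vacuously, so this edge case is absorbed automatically.)

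I expect the only delicate point to be the index-set bookkeeping rather than any conceptual difficulty: one must confirm that the $\ell$ produced above genuinely lies in $\{t+1,\dots,n-(t+1)\}\setminus\{n-k+t\}$ so that $Q_\ell(n)$ actually appears in the union, and that $m_0$ itself is admissible so that $U_{I,J}^{(t)}(m_0)$ is defined. Both follow from the range $t+1\leq m_0\leq n-(t+1)$ and the exclusion $m_0\neq n-k+t$ inherited from the index set, combined with the inequality $\ell\leq m_0-(t+1)$. Beyond carefully tracking these constraints and the suffix-composition identity, the argument is a routine extremal selection, so no real obstacle arises.
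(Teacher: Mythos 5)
Your proof is correct, and it takes a recognizably different route from the paper's. The easy inclusion (from $U_{I,J}^{(t)}(m)\subseteq V_{I,J}^{(t)}(m)$, hence $P_m(n)\subseteq Q_m(n)$) is the same in both. For the hard inclusion, the paper runs an induction on the suffix length $m$: the base case $t+1\le m<n-k+t$ is handled by noting that no admissible $\ell$ with $t+1\le\ell\le m-(t+1)$ exists, so $U_{I,J}^{(t)}(m)=V_{I,J}^{(t)}(m)$ and $Q_m(n)=P_m(n)$; the inductive step splits $Q_m(n)$ into the subset $Q_m^1(n)$ of words having a shorter admissible suffix in some $V_{I,J}^{(t)}(\ell)$ (absorbed into $\bigcup_r P_r(n)$ by the inductive hypothesis) and its complement $Q_m^2(n)$, which coincides with $P_m(n)$ by Construction \ref{U}. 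Your minimal-index argument is the well-ordering form of the same underlying idea --- if some admissible suffix lands in a set $V_{I,J}^{(t)}(\ell)$, pass to a shorter one --- but it compresses the base case, the two-set decomposition, and the induction into a single extremal selection, using only the suffix-composition identity and the fact that Construction \ref{U} is phrased in terms of the sets $V_{I,J}^{(t)}(\ell)$. That compression is a genuine gain in transparency: in particular, you never need the paper's intermediate claim that $V_{I,J}^{(t)}(m)=U_{I,J}^{(t)}(m)$ at $m=n-(t+1)$, a claim which, read literally, is false in general (in the paper's own Example \ref{Example}(1) one has $V_{\{0\}\{1\}}^{(2)}(6)=\{001001,001101\}$ while $U_{\{0\}\{1\}}^{(2)}(6)=\{001101\}$); what that part of the paper's argument actually needs and uses is only the identity $Q_m^2(n)=P_m(n)$, which does hold and follows directly from Construction \ref{U}. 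Your index bookkeeping ($t+1\le\ell\le m_0-(t+1)\le n-2(t+1)$ and $\ell\ne n-k+t$, so that $Q_\ell(n)$ is a legitimate term of the union, together with the admissibility of $m_0$ so that $U_{I,J}^{(t)}(m_0)$ is defined) is exactly the check required, so your argument has no gap.
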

\begin{proof}
By the very definitions of $V_{I,J}^{(t)}(m)$ and $U_{I,J}^{(t)}(m)$ (see  Constructions \ref{V} and \ref{U}),
$U_{I,J}^{(t)}(m)$ is a subset of $V_{I,J}^{(t)}(m)$.
It follows  that $P_m(n)\subseteq Q_m(n)$ for each integer $t+1\leq m\leq n-(t+1)$ with $m\neq n-k+t$, which
immediately gives
$$\bigcup\limits_{{\scriptsize\begin{array}{c}m=t+1,\\m\neq n-k+t \end{array}}}^{n-(t+1)}P_m(n)\subseteq \bigcup\limits_{{\scriptsize\begin{array}{c}m=t+1,\\m\neq n-k+t \end{array}}}^{n-(t+1)}Q_m(n).
$$
For the
reverse inclusion,
we only need to prove that for each $t+1\leq m\leq n-(t+1)$ with $m\neq n-k+t$,
\begin{equation}\label{pq}
Q_m(n)\subseteq \bigcup\limits_{{\scriptsize\begin{array}{c}r=t+1,\\r\neq n-k+t \end{array}}}^{n-(t+1)}P_r(n).
\end{equation}
For any fixed $t+1\leq m<n-k+t$, it is easy to see from the definition  of Construction \ref{U} that $V_{I,J}^{(t)}(m)=U_{I,J}^{(t)}(m)$
\big(since $m<n-k+t$ and $n-k-1\leq t$, we have $m<2(t+1)$. This means that there is no integer $\ell$ satisfying $t+1\leq\ell\leq m-(t+1)$\big). This gives  $Q_m(n)=P_m(n)$, proving Equation (\ref{pq}).
We use an induction argument on $m$ to show that Equation (\ref{pq}) holds true for   $t+1\leq m\leq n-(t+1)$ with $m\neq n-k+t$.
Assume that
Equation (\ref{pq}) is true for $m\leq n-(t+2)$.
Our idea is to separate $Q_m(n)$ into two disjoint subsets, each of which is contained in the right hand side of Equation (\ref{pq}).
For $m=n-(t+1)$,
we first define a subset $Q_m^1(n)$
of $Q_m(n)$ as follows: $Q_m^1(n)$ consists of elements  $\textbf{v}\in Q_m(n)$ satisfying  that there exists an integer $\ell$
with $t+1 \leq \ell\leq m-(t+1)=n-2(t+1)$ and $\ell\neq n-k+t$ such that
the suffix of $\textbf{v}$ of length $\ell$ belongs to $V_{I,J}^{(t)}(\ell)$.
It follows that
 \begin{equation*}
 Q_m^1(n)\subseteq \bigcup\limits_{{\scriptsize\begin{array}{c}\ell=t+1 \\ \ell\neq n-k+t \end{array}}}^{n-2(t+1)}Q_\ell(n).
 \end{equation*}
As $\ell\leq n-2(t+1)\leq n-(t+2)$, by the inductive hypothesis, we have
 \begin{equation*}
  \bigcup\limits_{{\scriptsize\begin{array}{c}\ell=t+1,\\ \ell\neq n-k+t \end{array}}}^{n-2(t+1)}Q_\ell(n)\subseteq \bigcup\limits_{{\scriptsize\begin{array}{c}r=t+1,\\r\neq n-k+t \end{array}}}^{n-(t+2)}P_r(n).
\end{equation*}
Hence, we have
$$
Q_m^1(n)\subseteq \bigcup\limits_{{\scriptsize\begin{array}{c}r=t+1,\\r\neq n-k+t \end{array}}}^{n-(t+2)}P_r(n).
$$
Now let
$Q_m^{2}(n)$
be the  complementary set of $Q_m^{1}(n)$ in $Q_m(n)$. It follows that
$Q_m^{2}(n)$ consists of elements   $\textbf{v}\in Q_m(n)\subseteq V_{I,J}^{(t)}(n)$ satisfying that
the  suffix of $\textbf{v}$ \big(whose $m$-length suffix belongs to $\in V_{I,J}^{(t)}(m)$\big) with length $\ell$ does not belong to $V_{I,J}^{(t)}(\ell)$
for all $t+1 \leq \ell\leq m-(t+1)=n-2(t+1)$ with $\ell\neq n-k+t$.
At this point, we claim that
$V_{I,J}^{(t)}(m)=U_{I,J}^{(t)}(m)$.
Indeed, fix an arbitrary element $\textbf{x}\in V_{I,J}^{(t)}(m)$.
By Construction \ref{V}, as $m=n-(t+1)$, an element ${\bf\widehat{x}}\in V_{I,J}^{(t)}(n)$ can be found such that the $m$-length suffix of $\bf\widehat{x}$
is equal to $\textbf{x}\in V_{I,J}^{(t)}(m)$.
We then have ${\bf\widehat{x}}\in Q_m(n)$.
If ${\bf\widehat{x}}\in Q_m^2(n)$, then the $m$-length suffix of ${\bf\widehat{x}}$ (i.e., $\textbf{x}$) satisfies that
the  suffix of $\textbf{x}$   with length $\ell$ does not belong to $V_{I,J}^{(t)}(\ell)$
for all $t+1 \leq \ell\leq m-(t+1)=n-2(t+1)$ with $\ell\neq n-k+t$.
We immediately get $\textbf{x}\in U_{I,J}^{(t)}(m)$. If  ${\bf\widehat{x}}\in Q_m^1(n)$,
then
$$
{\bf\widehat{x}}\in Q_m^1(n)\subseteq \bigcup\limits_{{\scriptsize\begin{array}{c}r=t+1,\\r\neq n-k+t \end{array}}}^{n-(t+2)}P_r(n).
$$
We again see that the $m$-length suffix of ${\bf\widehat{x}}$ (i.e., $\textbf{x}$) belongs to $U_{I,J}^{(t)}(m)$.
Thus, we have shown that $V_{I,J}^{(t)}(m)=U_{I,J}^{(t)}(m)$, as claimed.
Then we have $Q_{m}^2(n)=P_{m}(n)$.
It follows that
$$
Q_{m}^2(n)\subseteq \bigcup\limits_{{\scriptsize\begin{array}{c}r=t+1,\\r\neq n-k+t \end{array}}}^{n-(t+1)}P_r(n).
$$
Since
$Q_{m}(n)=Q_{m}^1(n)\bigcup Q_{m}^2(n),
$
we have
$$
Q_m(n)\subseteq \bigcup\limits_{{\scriptsize\begin{array}{c}r=t+1,\\r\neq n-k+t \end{array}}}^{n-(t+1)}P_r(n)
$$
for every $t+1 \leq m\leq n-(t+1)$ and $m\neq n-k+t$. The proof is  completed.
\end{proof}

According to Lemma \ref{PQ},  the additional property in Construction \ref{U} can   be rewritten as
``The suffix of $\textbf{s}$ with length $m$ does not belong to $U_{I,J}^{(t)}(m)$, where $t+1\leq m\leq n-(t+1)$ and $m\neq n-k+t$."
Construction \ref{U} is therefore equivalent to the following construction.
\vspace{0.2cm}

\noindent
{\bf Construction 3.5$'$}
Let  $n\geq 7$  and $q>1$ be integers.
Let  $I$ and $J$ form a bipartition of $\mathbb{Z}_{q}$. Let $k,t$ and $m$ be integers satisfying $n/2\leq k\leq n-2$,   $t=\max\{2,n-k-1\}$
and $t+1 \leq m\leq n$ with $m\neq n-k+t$. Define $U_{I,J}^{(t)}(m)$ to be a subset of $V_{I,J}^{(t)}(m)$
satisfying  the following  additional property:
The suffix of $\textbf{s}=(s_1,s_2,\cdots,s_m)\in V_{I,J}^{(t)}(m)$ with length $\ell$ does not belong to $U_{I,J}^{(t)}(\ell)$, where $t+1\leq \ell\leq m-(t+1)$ and $\ell\neq n-k+t$.

Note that the value of  $t=\max\{2,n-k-1\}$ in Construction  3.5$'$ is determined by the values of $n$ and $k$.
To correctly  understand Constructions  \ref{V}, \ref{U} and 3.5$'$,  we point out that the structures of
$V_{I,J}^{(t)}(m)$ and $U_{I,J}^{(t)}(m)$  actually  depend on the value of $k$, not just $t$.
For example,
$k=n-2$ and $k=n-3$ are resulting in  $t=2$; however,  the meaning for $V_{I,J}^{(2)}(n)$ and $U_{I,J}^{(2)}(n)$ is different according to various
values of $k$.
For $k=n-2$, the fourth coordinate of $V_{I,J}^{(2)}(n)$ and $U_{I,J}^{(2)}(n)$ must belong to $I$. For $k=n-3$, the fifth coordinate of $V_{I,J}^{(2)}(n)$ and $U_{I,J}^{(2)}(n)$ must belong to $I$.
We believe that after reading the context, there will be no confusion.
For convenience, we adopt the notations  $V_{I,J}^{(t)}(m)$ and $U_{I,J}^{(t)}(m)$.

We now include a small example to illustrate Construction 3.5$'$.
\begin{Example}\label{Example}{\rm
(1) Take $n=9, k=6 $ and $ q=2$. In this case, $t=2$, $3\leq m\leq 6$ and $m\neq5$.
By Construction \ref{V}, it is easy to see that
\begin{equation*}
\begin{split}
V_{\{0\}\{1\}}^{(2)}(9)=
&\big\{001000001, 001001001, 001000011,
001000101,001000111,001001011,001001101,001001111,\\&001100001,
001100011,001100101,
001100111,001101001,001101011,001101101,001101111\big\}.
\end{split}
\end{equation*}
Clearly, it is readily seen that $V_{\{0\}\{1\}}^{(2)}(9)$ is not a cross-bifix-free code.
Construction \ref{U} (or equivalently Construction $3.5'$) enables us to find a cross-bifix-free code within
$V_{\{0\}\{1\}}^{(2)}(9)$. Indeed, we have
$$
U_{\{0\}\{1\}}^{(2)}(3)=\big\{001\big\},~~~~
U_{\{0\}\{1\}}^{(2)}(4)=\big\{0011\big\}~~\hbox{and}~~ U_{\{0\}\{1\}}^{(2)}(6)=\big\{001101\big\}.
$$
By the additional property of Construction 3.5$'$,
it can be concluded that
\begin{equation*}
\begin{split}
U_{\{0\}\{1\}}^{(2)}(9)=&\big\{001000101,001000111,
001001011,001001111,001100101,001100111,001101011,\\
&001101101,
001101111\big\},
\end{split}
\end{equation*}
is a cross-bifix-free  code.

(2) Take $n=9, k=7 $ and $ q=2$. In this case, $t=2$, $3\leq m\leq 6$ and $m\neq4$.
By Construction \ref{V}, it is easy to see that
\begin{equation*}
\begin{split}
V_{\{0\}\{1\}}^{(2)}(9)=
&\big\{001000001,001000011 ,001000101 ,001000111
,001001001,001001011,001001101,001001111,\\&001010001,001010011
,001010101,001010111
,001011001,001011011,001011101,001011111\big\}.
\end{split}
\end{equation*}
Clearly,  $V_{\{0\}\{1\}}^{(2)}(9)$ is not a cross-bifix-free code.
Construction \ref{U} (or equivalently Construction $3.5'$) enables us to find a cross-bifix-free code within
$V_{\{0\}\{1\}}^{(2)}(9)$. Indeed, we have
$$
U_{\{0\}\{1\}}^{(2)}(3)=\big\{001\big\},~~~~
U_{\{0\}\{1\}}^{(2)}(5)=\big\{00101\big\}~~\hbox{and}~~ U_{\{0\}\{1\}}^{(2)}(6)=\big\{001011\big\}.
$$
By the additional property of Construction 3.5$'$,
we see that
\begin{equation*}
\begin{split}
U_{\{0\}\{1\}}^{(2)}(9)=&\big\{001000011,001000111
,001001101,001001111,001010011,001010101,001010111,001011011,\\
&001011101,001011111
\big\}
\end{split}
\end{equation*}
is a cross-bifix-free  code.}
\end{Example}

We have shown in Lemma \ref{lem2} that $S_{I,J}^{(k)}(n)\bigcup U_{I,J}^{(t)}(n)$ is a cross-bifix-free code.
With Lemma \ref{lem2} and the equivalence of Constructions \ref{U} and   3.5$'$, we are now in a position to obtain   that
$S_{I,J}^{(k)}(n)\bigcup U_{I,J}^{(t)}(n)$ is indeed non-expandable.
\begin{Theorem}
For any fixed $n\geq 7$, $n/2\leq k\leq n-2$ and $t=\max\{2,n-k-1\}$,  $S_{I,J}^{(k)}(n)\bigcup U_{I,J}^{(t)}(n)$ is a
non-expandable cross-bifix-free code.
\end{Theorem}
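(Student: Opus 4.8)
The plan is to prove non-expandability straight from the definition: fix an arbitrary $\mathbf{x}=(x_1,\dots,x_n)\in\mathbb{Z}_q^n\setminus\big(S_{I,J}^{(k)}(n)\cup U_{I,J}^{(t)}(n)\big)$ and show that $\mathbf{x}$ overlaps with $S_{I,J}^{(k)}(n)\cup U_{I,J}^{(t)}(n)$. Since this union is already cross-bifix-free by Lemma \ref{lem2}, it suffices to exhibit one of: a non-empty prefix of $\mathbf{x}$ equal to a suffix of some codeword, a non-empty suffix of $\mathbf{x}$ equal to a prefix of some codeword, or a coincidence of a prefix and a suffix of $\mathbf{x}$ itself. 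First I would dispose of the two boundary situations exactly as in the proof of Lemma \ref{theorem1}: if $x_1\in J$ then the length-$1$ prefix of $\mathbf{x}$ is the length-$1$ suffix of some codeword of $S_{I,J}^{(k)}(n)$ (all of which end in $J$), and dually if $x_n\in I$; in either case $\mathbf{x}$ overlaps. Hence from now on I may assume $x_1\in I$ and $x_n\in J$.

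The key structural observation I would establish is that the suffixes of $S_{I,J}^{(k)}(n)$ are extremely rich. Since $k\ge n/2$, a codeword of $S_{I,J}^{(k)}(n)$ has the form $I^k\times J\times\mathbb{Z}_q^{\,n-k-2}\times J$, so for every $1\le p\le n-k-1$ its length-$p$ suffix ranges over \emph{all} words of length $p$ ending in $J$. Consequently, if $x_p\in J$ for some $1\le p\le n-k-1$, then the prefix $(x_1,\dots,x_p)$ of $\mathbf{x}$ is realised as such a suffix and $\mathbf{x}$ overlaps. Therefore I may assume $x_1,\dots,x_{n-k-1}\in I$; as $n-k-1\le t$ with equality when $k\le n-3$, this pins down the leading coordinates $x_1,\dots,x_t\in I$ in the range $k\le n-3$. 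I would then compare $\mathbf{x}$ against the template of $V_{I,J}^{(t)}(n)$ (Construction \ref{V}) and split into two cases. Either $\mathbf{x}\in V_{I,J}^{(t)}(n)$, or $\mathbf{x}\notin V_{I,J}^{(t)}(n)$; in the latter case one of the defining constraints fails (the first $J$ is not at position $t+1$, i.e.\ the leading run of $I$'s is too long, or $x_{n-k+t}\notin I$), and a further matching of a prefix of $\mathbf{x}$ against a structured suffix of an $S_{I,J}^{(k)}(n)$- or $U_{I,J}^{(t)}(n)$-codeword is designed to force an overlap.

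It remains to treat the principal case $\mathbf{x}\in V_{I,J}^{(t)}(n)$. Here the leading $I$-run of $\mathbf{x}$ has length exactly $t<k$, so $\mathbf{x}\notin S_{I,J}^{(k)}(n)$; combined with $\mathbf{x}\notin U_{I,J}^{(t)}(n)$ this gives $\mathbf{x}\in V_{I,J}^{(t)}(n)\setminus U_{I,J}^{(t)}(n)$. At this point I invoke Construction 3.5$'$, whose equivalence with Construction \ref{U} is guaranteed by Lemma \ref{PQ}: the additional property must fail for $\mathbf{x}$, so there is an integer $\ell$ with $t+1\le\ell\le n-(t+1)$ and $\ell\ne n-k+t$ such that the length-$\ell$ suffix $\mathbf{w}$ of $\mathbf{x}$ lies in $U_{I,J}^{(t)}(\ell)$. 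The plan is then to show that $\mathbf{w}$ occurs as a length-$\ell$ prefix of some codeword $\mathbf{u}\in U_{I,J}^{(t)}(n)$; since $\ell<n$, this exhibits $\mathbf{w}$ as a non-empty proper suffix of $\mathbf{x}$ that is simultaneously a prefix of a codeword, so $\mathbf{x}$ overlaps, giving the desired contradiction.

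I expect the extension step of the last paragraph to be the main obstacle: one must verify that every word of $U_{I,J}^{(t)}(\ell)$ with $t+1\le\ell\le n-(t+1)$ really does arise as the length-$\ell$ prefix of a full codeword of $U_{I,J}^{(t)}(n)$. Concretely, I would append $n-\ell\ge t+1$ coordinates to $\mathbf{w}$ so that the positional constraints of $V_{I,J}^{(t)}(n)$ are met (placing an element of $I$ at position $n-k+t$ if that position is not already fixed by $\mathbf{w}$, and an element of $J$ at position $n$), while choosing the appended coordinates so as to destroy every potentially offending shorter suffix and thereby keep the extended word inside $U_{I,J}^{(t)}(n)$; the freedom available (since $q>1$ and $n-\ell\ge t+1$) should make this achievable, but it is exactly the bookkeeping that the maximality condition of Construction \ref{U} forces us to control. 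A secondary and more routine difficulty is the case $t=2$, that is $k=n-2$, where $n-k-1<t$ and the structural observation above yields only $x_1\in I$; there the subcases $\mathbf{x}\notin V_{I,J}^{(t)}(n)$ must be checked by hand, much as in the small-$n$ verifications already recorded before Construction \ref{V}.
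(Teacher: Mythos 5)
Your skeleton is the paper's skeleton: dispose of $x_1\in J$ and $x_n\in I$, pin the leading coordinates using $S_{I,J}^{(k)}(n)=I^k\times J\times\mathbb{Z}_q^{n-k-2}\times J$, and, for $\mathbf{x}\in V_{I,J}^{(t)}(n)\setminus U_{I,J}^{(t)}(n)$, use the equivalence of Construction \ref{U} with Construction 3.5$'$ (Lemma \ref{PQ}) to find a suffix of $\mathbf{x}$ in some $U_{I,J}^{(t)}(\ell)$ and extend it to a codeword of $U_{I,J}^{(t)}(n)$. But the two steps you defer are genuine gaps, and the first you mischaracterize as routine. The case $\mathbf{x}\notin V_{I,J}^{(t)}(n)$ cannot be settled by ``matching a prefix of $\mathbf{x}$ against a structured suffix of an $S$- or $U$-codeword.'' Suppose the leading $I$-run of $\mathbf{x}$ has length $\ell$ with $t<\ell\leq k-1$ and $x_{n-k+\ell}\in I$ (the paper's subcase (1.3) with $\ell>t$). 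Then no prefix of $\mathbf{x}$ is a suffix of an $S$-codeword: prefixes of length $p\leq n-k-1$ end in $I$; for $n-k\leq p<n-k+\ell$ the $S$-suffix has its forced $J$ at relative position $k+p+1-n\leq\ell$, where $\mathbf{x}$ has $I$; for $p=n-k+\ell$ the $S$-suffix ends in $J$ while $x_{n-k+\ell}\in I$; and for $p>n-k+\ell$ the $S$-suffix has $I$ at relative position $\ell+1$, where $\mathbf{x}$ has $J$. Matching against suffixes of $U$-codewords is blocked by the very exclusion defining $U_{I,J}^{(t)}(n)$: for instance, a codeword whose suffix were $I^{\ell}\times J$ would have its $(t+1)$-length suffix equal to $I^{t}\times J\in V_{I,J}^{(t)}(t+1)$, which Construction \ref{U} forbids. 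The paper handles this case by writing $\mathbf{x}\in I^{\ell-t}\times V_{I,J}^{(t)}(n-\ell+t)$ and re-running the entire $U$-machinery on \emph{suffixes} of $\mathbf{x}$: either $\mathbf{x}\in I^{\ell-t}\times U_{I,J}^{(t)}(n-\ell+t)$, whose $(n-\ell+t)$-length suffix is a prefix of a codeword in $U_{I,J}^{(t)}(n-\ell+t)\times J^{\ell-t}\subseteq U_{I,J}^{(t)}(n)$, or some shorter suffix of $\mathbf{x}$ lies in some $U_{I,J}^{(t)}(m)$ and is extended as in your main case. So the machinery you build only for $\ell=t$ must be run for every $\ell\in\{t,\dots,k-1\}$; the deferred case is most of the proof. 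The same remark applies to vectors that are not $I^k$-free (where a suffix of $\mathbf{x}$ must be matched with a prefix of an $S$-codeword, a direction your structural observation never uses) and to the subcases you postpone when $k=n-2$.

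The second gap is the extension step, which you correctly single out as the crux but do not carry out. The paper resolves it with explicit completions: for $t+1\leq m<n-k+t$ append a free block, an element of $I$ at position $n-k+t$, and then elements of $J$; for $n-k+t<m\leq n-(t+1)$ append $J^{n-m}$. Your caution about the bookkeeping is justified, because an arbitrary completion can resurrect a forbidden suffix and the appended symbols must essentially be taken from $J$. Concretely, take $q=2$, $I=\{0\}$, $n=14$, $k=7$, so $t=6$ and the only exclusion length is $m=7$ with $U_{I,J}^{(6)}(7)=I^6\times J$: the word $(0^6,1,0^5,0,1)$ completes $(0^6,1)\in U_{I,J}^{(6)}(7)$ and meets every positional constraint of $V_{I,J}^{(6)}(14)$, yet its $7$-length suffix is $(0^6,1)\in V_{I,J}^{(6)}(7)$, so it lies outside $U_{I,J}^{(6)}(14)$; the completion $(0^6,1,1^5,0,1)$ does lie in $U_{I,J}^{(6)}(14)$. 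Proving that a good completion always exists is exactly what the paper's stated inclusions encode, and without this step, and without the missing case above, your proposal establishes the theorem only in the single subcase $\mathbf{x}\in V_{I,J}^{(t)}(n)$, and even there only modulo an unproven lemma.
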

\begin{proof}
Take a typical $q$-ary vector $\mathbf{x}=(x_1,x_2,\cdots,x_n)\in \mathbb{Z}_q^n\setminus\big(S_{I,J}^{(k)}(n)\bigcup U_{I,J}^{(t)}(n)\big)$.  It is enough to show that $S_{I,J}^{(k)}(n)\bigcup U_{I,J}^{(t)}(n)\bigcup\big\{\mathbf{x}\big\}$ is not a cross-bifix-free code, or in other words,
$\mathbf{x}$ overlaps with  some codeword in $S_{I,J}^{(k)}(n)\bigcup U_{I,J}^{(t)}(n)$.
Clearly, the $q$-ary vector $\mathbf{x}=(x_1,x_2,\cdots,x_n)$ overlaps with every codeword in $S_{I,J}^{(k)}(n)\bigcup U_{I,J}^{(t)}(n)$ if $x_1\in J$ or $x_n\in I$.
Therefore, we only need to consider the case where  $\mathbf{x} \in (I \times \mathbb{Z}_{q}^{n-2} \times J) \setminus \big(S_{I,J}^{(k)}(n)\bigcup U_{I,J}^{(t)}(n)\big)$.
Now we show that $\textbf{x}$ can be restricted  to be $I^k$-free.
To this end, suppose otherwise that $\textbf{x}$ is not $I^k$-free.
In this case, $\textbf{x}$ must be in one of the following two forms:
$\mathbb{Z}_q^{n-k-1}\times I^k\times J$ and
$\mathbb{Z}_q^{n-k-1-r}\times I^k\times J\times R(r)$,
where $R(r)$ is an $I^k$-free string with length $r$  of which the last coordinate belongs to $J$.
If $\textbf{x}$ is in the first form, then $\textbf{x}$ overlaps with $S_{I,J}^{(k)}(n)$ because the $(k+1)$-length
suffix of  $\textbf{x}$ is identical with the $(k+1)$-length prefix of some codeword of $S_{I,J}^{(k)}(n)$.
For the case where $\textbf{x}$ is in the second form,
note that $I^k\times J\times R(r)\times J^{n-k-r-1}\subseteq S_{I,J}^{(k)}(n)$.
It follows that  $\textbf{x}$ also overlaps with some codeword in $S_{I,J}^{(k)}(n)$ by considering the
$(k+1+r)$-length suffix  of $\textbf{x}$ and the codewords in $I^k\times J\times R(r)\times J^{n-k-r-1}$.
In conclusion, we can assume that $\mathbf{x} \in (I \times \mathbb{Z}_{q}^{n-2} \times J) \setminus \big(S_{I,J}^{(k)}(n)\bigcup U_{I,J}^{(t)}(n)\big)$ and $\textbf{x}$  is $I^k$-free.

Recall that $n/2\leq k\leq n-2$ and $t=\max\{2,n-k-1\} $.
As we did in the proof of  Lemma \ref{theorem1}, we consider the following two cases separately.

{\bf Case 1)} $n/2\leq k< n-2$.
Since $\mathbf{x}$   is $I^k-$free,
assume that $\mathbf{x}\in I^\ell \times J \times \mathbb{Z}_{q}^{n-\ell-1}$ for some $1\leq \ell\leq k-1$.
At this point, we  discuss the following subcases.
\begin{itemize}
\item[(1.1)] $\ell< n-k-1$. In this subcase,  the $(\ell+1)$-length prefix of $\mathbf{x}$ is identical with the
$(\ell+1)$-length suffix of some $\mathbf{y}$ in $I^k\times J^{n-k-\ell-1}\times I^\ell\times J\subseteq S_{I,J}^{(k)}(n)$. Hence, $\mathbf{x}$ overlaps with  some codeword from $S_{I,J}^{(k)}(n)\bigcup U_{I,J}^{(t)}(n)$.

\item[(1.2)] $n-k-1\leq \ell\leq k-1$ and the $(n-k+\ell)$-th coordinate of $\mathbf{x}$ belongs to $J$. In this subcase,
$\mathbf{x}$ is in the form $I^\ell\times J\times \mathbb{Z}_q^{n-k-2}\times J\times \mathbb{Z}_q^{n-\ell-(n-k)}$ and
the $(n-k+\ell)$-length prefix of $\mathbf{x}$ is
in the form $I^\ell\times J\times \mathbb{Z}_q^{n-k-2}\times J$.
On the other hand, the code $S_{I,J}^{(k)}(n)$ contains elements in the form
$I^{k-\ell}\times I^\ell \times J \times \mathbb{Z}_{q}^{n-k-2} \times J$.
We obtain the desire result by considering the $(n-k+\ell)$-length prefix of $\mathbf{x}$
and the $(n-k+\ell)$-length suffix of the codewords in $I^{k-\ell}\times I^\ell \times J \times \mathbb{Z}_{q}^{n-k-2} \times J$.

\item[(1.3)] $n-k-1\leq \ell\leq k-1$ and the $(n-k+\ell)$-th coordinate  of $\mathbf{x}$ belongs to $I$.
In this subcase, $\textbf{x}$ is in the form $I^{\ell-t}\times I^{t}\times J\times\cdots\times I\times\cdots \times J$,
where the single $I$ is in the $(n-k+\ell)$-th coordinate.
Note that $I^{t}\times J\times\cdots\times I\times\cdots \times J$ is exactly equal to
$V_{I,J}^{(t)}\big(n-(\ell-t)\big)=V_{I,J}^{(t)}\big(n-\ell+t\big)$ (see Construction \ref{V}).
We have $\textbf{x}\in I^{\ell-t}\times V_{I,J}^{(t)}\big(n-\ell+t\big)$.
Assume first that  the $m$-length suffix  of $\mathbf{x}$ does not belong to $U_{I,J}^{(t)}(m)$ for all
$t+1\leq m \leq n-\ell+t-(t+1)=n-(\ell+1)\leq n-(t+1)$ and $m\neq n-k+t$.
In this situation,
$\mathbf{x}$ belongs to $I^{\ell-t}\times U_{I,J}^{(t)}\big(n-\ell+t\big)$
since $\ell-t>0$. This implies that
the $(n-\ell+t)$-length  suffix of  $\mathbf{x}$ is a prefix of some $\mathbf{y}$ in
$ U_{I,J}^{(t)}(n-\ell+t)\times J^{\ell-t}\subseteq U_{I,J}^{(t)}(n)$.
 We  therefore remain to consider the case where
there exists an integer $m$ such that the $m$-length suffix of $\mathbf{x}$  belongs to $U_{I,J}^{(t)}(m)$,
where $t+1\leq m \leq n-(\ell+1)\leq n-(t+1)$ and $m\neq n-k+t$.
The $m$-length suffix of $\mathbf{x}$ is thus identical with an $m$-length prefix of some element in
$ U_{I,J}^{(t)}(m)\times\mathbb{Z}_{q}^{n-k+t-m-1} \times I\times J^{k-t} \subseteq U_{I,J}^{(t)}(n)$ $(t+1\leq m<n-k+t)$ or
$ U_{I,J}^{(t)}(m)\times J^{n-m} \subseteq U_{I,J}^{(t)}(n)$ $(t+n-k<m\leq n-(\ell+1)\leq n-(t+1))$.
Hence, $\mathbf{x}$ overlaps with  some codeword from $S_{I,J}^{(k)}(n)\bigcup U_{I,J}^{(t)}(n)$.
\end{itemize}

{\bf Case 2)}
$ k=n-2$. In this case, let $\mathbf{x}\in I^\ell \times J \times \mathbb{Z}_{q}^{n-\ell-1}$ for some $1\leq \ell\leq n-3$. At this point, we  discuss the following subcases.
\begin{itemize}
\item[(2.1)] $\ell< t=2$. In this subcase,  the $(\ell+1)$-length (or $2$-length) prefix of $\mathbf{x}$ is identical with the
$(\ell+1)$-length (or $2$-length) suffix of some $\mathbf{y}$ in $I^2\times J\times I\times J^{n-6}\times I\times J \subseteq U_{I,J}^{(2)}(n)$ since $n-6>0$. Hence, $\mathbf{x}$ overlaps with  some codeword from $S_{I,J}^{(k)}(n)\bigcup U_{I,J}^{(t)}(n)$.
\end{itemize}

The proofs for the subcases (2.2) and (2.3) are quite similar to the subcases (1.2) and (1.3) of {\bf Case 1)}, where
the subcase (2.2) is read as $2=t\leq \ell\leq k-1=n-3$ and the $(n-k+\ell)$-th (i.e., $(2+\ell)$-th) coordinate of $\mathbf{x}$ belongs to $J$,
and the subcase (2.3) is read as  $2=t\leq \ell\leq k-1=n-3$ and the $(n-k+\ell)$-th (i.e., $(2+\ell)$-th) coordinate  of $\mathbf{x}$ belongs to $I$.
We omit the proofs here.

Therefore, we conclude that
$S_{I,J}^{(k)}(n)\bigcup U_{I,J}^{(t)}(n)\bigcup\big\{\mathbf{x}\big\}$ is not a cross-bifix-free code for all $\mathbf{x}\in \mathbb{Z}_q^n\setminus\big(S_{I,J}^{(k)}(n)\bigcup U_{I,J}^{(t)}(n)\big)$.  The proof is thus completed.
\end{proof}

\section{Count the size of  $U_{I,J}^{(t)}(n)$}\label{section4}
We have just shown that  $S_{I,J}^{(k)}(n)\bigcup U_{I,J}^{(t)}(n)$ is a
non-expandable cross-bifix-free code if $n\geq 7$, $n/2\leq k\leq n-2$ and $t=\max\{2,n-k-1\}$.
In this section, we assume that $n\geq 7$, $n/2\leq k\leq n-2$ and $t=\max\{2,n-k-1\}$.
Note that   $S_{I,J}^{(k)}(n)\bigcup U_{I,J}^{(t)}(n)$ is a  disjoint union
simply because $k\neq t$.
Note also that $S_{I,J}^{(k)}(n)=I^k\times J\times \mathbb{Z}_q^{n-k-2}\times J$ since $n-k-2\leq k-1$ by our assumption $k\geq n/2$.
We immediately have $|S_{I,J}^{(k)}(n)|=q^{n-k-2}|I|^k|J|^2$.
Consequently, the problem of counting the size of $S_{I,J}^{(k)}(n)\bigcup U_{I,J}^{(t)}(n)$  is fully converted to that of
counting the size of $U_{I,J}^{(t)}(n)$.
The purpose of this section is to present an enumerative formula for the size of $U_{I,J}^{(t)}(n)$.

Recall that
\begin{align*}
P_m(n)=\big\{\textbf{s}\in V_{I,J}^{(t)}(n)\,\big|\, \mbox{the suffix of  $\textbf{s}$  of length $m$ belongs to $ U_{I,J}^{(t)}(m)$}\big\},
\end{align*}
where $m$ is a positive integer satisfying $t+1\leq m\leq n-(t+1)$ and $m\neq n-k+t$.
To count the size of $U_{I,J}^{(t)}(n)$, we need the following lemma.
\begin{lem}\label{disjoint}
Let notation be the same as above. We have
$P_m(n)\bigcap P_\ell(n)=\emptyset$, where $m,\ell$ satisfy
$t+1\leq m<\ell\leq n-(t+1)$ and $m,\ell\neq n-k+t$.
\end{lem}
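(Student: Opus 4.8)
We need to show that the sets $P_m(n)$ and $P_\ell(n)$ are disjoint whenever $t+1\leq m<\ell\leq n-(t+1)$ with $m,\ell\neq n-k+t$. Recall that $P_m(n)$ consists of those $\textbf{s}\in V_{I,J}^{(t)}(n)$ whose $m$-length suffix lies in $U_{I,J}^{(t)}(m)$, and similarly for $P_\ell(n)$ with its $\ell$-length suffix in $U_{I,J}^{(t)}(\ell)$. So a vector in the intersection would be an element $\textbf{s}\in V_{I,J}^{(t)}(n)$ whose $m$-length suffix $\textbf{a}$ belongs to $U_{I,J}^{(t)}(m)$ and whose $\ell$-length suffix $\textbf{b}$ belongs to $U_{I,J}^{(t)}(\ell)$.

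**The plan.** The key observation is that since $m<\ell$, the $m$-length suffix $\textbf{a}$ of $\textbf{s}$ is itself the $m$-length suffix of the longer suffix $\textbf{b}$. I would argue by contradiction: suppose $\textbf{s}\in P_m(n)\cap P_\ell(n)$. Then $\textbf{b}\in U_{I,J}^{(t)}(\ell)$, and $\textbf{b}$ has an $m$-length suffix, namely $\textbf{a}$, which lies in $U_{I,J}^{(t)}(m)\subseteq V_{I,J}^{(t)}(m)$. The strategy is to check that the index $m$ falls into the forbidden range dictated by the defining property of $U_{I,J}^{(t)}(\ell)$, thereby producing a direct contradiction with $\textbf{b}\in U_{I,J}^{(t)}(\ell)$.

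**Carrying it out.** By Construction \ref{U}, membership $\textbf{b}\in U_{I,J}^{(t)}(\ell)$ requires that the suffix of $\textbf{b}$ of length $r$ does \emph{not} belong to $V_{I,J}^{(t)}(r)$ for every $r$ with $t+1\leq r\leq \ell-(t+1)$ and $r\neq n-k+t$. So I would verify that the chosen index $m$ satisfies $t+1\leq m\leq \ell-(t+1)$ and $m\neq n-k+t$. The lower bound $m\geq t+1$ and the exclusion $m\neq n-k+t$ are given by hypothesis. The condition $m\neq n-k+t$ already excludes that troublesome value, so the only genuine thing to confirm is the upper bound $m\leq \ell-(t+1)$, equivalently $\ell\geq m+t+1$. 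This is exactly the step I expect to be the crux: a priori we are only told $m<\ell$, which gives $\ell\geq m+1$, not $\ell\geq m+t+1$. I would resolve this by noting that the construction only imposes the ``does-not-belong'' condition on suffix-lengths up to $\ell-(t+1)$, so if $\ell-(t+1)<m<\ell$ then $m$ lies \emph{outside} the range where the condition bites, and a separate argument is needed.

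**Handling the gap and concluding.** For the regime $\ell-(t+1)<m<\ell$, I would instead exploit the structural form of $U_{I,J}^{(t)}$ directly: by Lemma \ref{PQ} the additional property may be phrased with $U$ in place of $V$, so $\textbf{a}\in U_{I,J}^{(t)}(m)$ forces the first $t+1$ coordinates of $\textbf{a}$ to lie in $I^t\times J$, whereas within $\textbf{b}\in U_{I,J}^{(t)}(\ell)$ those same coordinates of $\textbf{a}$ sit at positions $\ell-m+1,\dots,\ell-m+t+1$ of $\textbf{b}$, which is an interior window; comparing the forced $I$/$J$-membership pattern there against what Construction \ref{V} dictates for $\textbf{b}$ yields a coordinatewise contradiction (an element forced simultaneously into $I$ and into $J$, or into $I^t\times J$ where a $J$-entry is already fixed). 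In the complementary regime $t+1\leq m\leq \ell-(t+1)$, the contradiction is immediate from Construction \ref{U} applied to $\textbf{b}$ as described above, since $U_{I,J}^{(t)}(m)\subseteq V_{I,J}^{(t)}(m)$ while the defining property of $\textbf{b}\in U_{I,J}^{(t)}(\ell)$ forbids its $m$-length suffix from lying in $V_{I,J}^{(t)}(m)$. Either way $P_m(n)\cap P_\ell(n)=\emptyset$, completing the proof.
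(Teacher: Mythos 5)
Your proof is correct and follows essentially the same two-case strategy as the paper's: for $t+1\le m\le \ell-(t+1)$ you invoke the forbidden-suffix property defining $U_{I,J}^{(t)}(\ell)$ against $\textbf{a}\in U_{I,J}^{(t)}(m)\subseteq V_{I,J}^{(t)}(m)$, and for $\ell-(t+1)<m<\ell$ you derive the coordinatewise clash between the $I^t\times J$ prefix pattern forced on the $m$-length suffix and the structure of $V_{I,J}^{(t)}(\ell)$, exactly as the paper does. If anything, your placement of the boundary value $m=\ell-(t+1)$ in the first regime is slightly cleaner than the paper's split (its Case 2 nominally contains $m=\ell-(t+1)$ but only treats $m\ge\ell-t$), and your use of Construction \ref{U} directly in that regime avoids the detour through Construction 3.5$'$; the one blemish is your invocation of Lemma \ref{PQ} in the second regime, which is unnecessary since the prefix pattern of $\textbf{a}$ already follows from $U_{I,J}^{(t)}(m)\subseteq V_{I,J}^{(t)}(m)$.
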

\begin{proof}
We consider the following two cases separately.
\begin{itemize}

\item[\bf{Case 1)}] $t+1\leq m<\ell-(t+1)$ and    $m,\ell\neq n-k+t$.
Suppose otherwise that $P_m(n)\bigcap P_\ell(n)$ is non-empty, say $\textbf{u}\in P_m(n)\bigcap P_\ell(n)$.
It follows from $\textbf{u}\in P_\ell(n)$ that the $\ell$-length suffix of $\textbf{u}$ belongs to $U_{I,J}^{(t)}(\ell)$.
By the definition of $U_{I,J}^{(t)}(\ell)$ (see Construction $3.5'$), the $m$-length suffix of every vector in $U_{I,J}^{(t)}(\ell)$
does not belong to $U_{I,J}^{(t)}(m)$ for any $m$ satisfying $t+1\leq m<\ell-(t+1)$ and    $m\neq n-k+t$.
This implies that the $m$-length suffix of $\textbf{u}$ does not belong to $U_{I,J}^{(t)}(m)$, contradicting to the assumption
$\textbf{u}\in P_m(n)$. We thus have shown that $P_m(n)\bigcap P_\ell(n)=\emptyset$.

\item[\bf{Case 2)}] $\ell-(t+1)\leq m<\ell$ and    $m,\ell\neq n-k+t$.
Suppose otherwise that $P_m(n)\bigcap P_\ell(n)$ is non-empty, say $\textbf{u}\in P_m(n)\bigcap P_\ell(n)$.
As in the previous case,  the $\ell$-length suffix of $\textbf{u}$ belongs to $U_{I,J}^{(t)}(\ell)$.
Recall that $U_{I,J}^{(t)}(\ell)$ is a subset of
$V_{I,J}^{(t)}(\ell)=I^t\times J\times \mathbb{Z}_q^{\ell-t-2}\times J$.
If $m=\ell-t$, then  the $m$-length suffix of $\textbf{u}$ belongs to  $J\times \mathbb{Z}_q^{\ell-t-2}\times J$
which is disjoint from $U_{I,J}^{(t)}(m)$. This gives that the $m$-length suffix of $\textbf{u}$  does not belong to $U_{I,J}^{(t)}(m)$,
or equivalently $\textbf{u}\notin P_m(n)$.
If $m>\ell-t$, then  the $m$-length suffix of $\textbf{u}$ belongs to  $I^{m-(\ell-t)}\times J\times \mathbb{Z}_q^{\ell-t-2}\times J$
which is also disjoint from $U_{I,J}^{(t)}(m)$ because $m-(\ell-t)<t$.
We then have $\textbf{u}\notin P_m(n)$, a contradiction again.
\end{itemize}
\end{proof}
Note that it is very easy to derive an explicit formula for the value of
$|U_{I,J}^{(t)}(n)|=u_{I,J}^{(t)}(n)$ if  $n=7$ and $k=n-2$.
Indeed,
if $ n=7$ then $k=n-2=5$ and
$
u_{I,J}^{(t)}(n)=q^{2}|I|^{3}|J|^2-|I|^{2}|J||I|u_{I,J}^{(t)}(3)=|I|^{3}|J|^2(q^2-|I|^2).
$
Therefore, we only need to consider the case $n\geq8$ if $k=n-2$.
We are now in a position to state the main result of this section.

\begin{Theorem}\label{u}
Let notation be the same as above. We assume that $n\geq8$ if $k=n-2$.
Let $U_{I,J}^{(t)}(n)$ be the code given as in Construction $3.5'$.
The cardinality of  $U_{I,J}^{(t)}(n)$, denoted by $u_{I,J}^{(t)}(n)$,  is given as follows:

\noindent
If  $\frac{n}{2}\leq k<\frac{2}{3}n-\frac{1}{3}$, then
$$
u_{I,J}^{(t)}(n)=|I|^{n-k}|J|^2q^{k-2}-|I|^{2(n-k)-2}|J|^2q^{2k-n-1}\big((2k-n)|J|+q\big).
$$
\noindent
If  $k=\frac{2}{3}n-\frac{1}{3}$, then
$$u_{I,J}^{(t)}(n)=q^{k-2}|I|^{n-k}|J|^2-|I|^{2n-2k-2}|J|^2q^{2k-n-1}\big(|I|+(2k-n-1)|J|\big).$$
\noindent
If  $\frac{2}{3}n\leq k<\frac{3}{4}n-\frac{1}{2}$, then
\begin{equation*}
\begin{split}
u_{I,J}^{(t)}(n)&=q^{k-2}|I|^{n-k}|J|^2-|I|^{2n-2k-2}|J|^2q^{2k-n-2}\big((6k-4n+2)|I||J|+|I|q+(3n-3-4k)|J|q\big)\\
&+|I|^{3n-3k-3}|J|^3q^{3k-2n-1}\big((3k-2n+1)q+\frac{(3k-2n+1)(3k-2n)}{2}|J|\big).
\end{split}
\end{equation*}

\noindent
If  $\frac{3}{4}n-\frac{1}{2}\leq k<n-2$, then
\begin{equation*}
\begin{split}
u_{I,J}^{(t)}(n)&=q^{k-2}|I|^{n-k}|J|^2-\sum\limits_{{\scriptsize\begin{array}{c}m=t+1,\\m\neq t+n-k \end{array}}}^{k-t}
|I|^{t+1}|J|q^{n-t-m-2}u_{I,J}^{(t)}(m)\\
&-\sum\limits_{m=k-t+1}^{n-(t+1)}|I|^t|J|q^{n-t-m-1}u_{I,J}^{(t)}(m).
\end{split}
\end{equation*}

\noindent
If  $k=n-2$ and $n\geq 8$, then
$$
u_{I,J}^{(t)}(n)=q^{n-5}|I|^{3}|J|^2-\sum\limits_{{\scriptsize\begin{array}{c}m=3,\\m\neq 4 \end{array}}}^{n-4}
|I|^{3}|J|q^{n-m-4}u_{I,J}^{(2)}(m)-|I|^2|J|u_{I,J}^{(2)}(n-3).
$$
\end{Theorem}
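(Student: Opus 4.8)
The plan is to reduce the entire theorem to one master recurrence and then unfold it. I would first observe that, by Construction 3.5$'$, a vector of $V_{I,J}^{(t)}(n)$ fails to lie in $U_{I,J}^{(t)}(n)$ precisely when one of its suffixes of some admissible length $m$ lies in $U_{I,J}^{(t)}(m)$; in the notation preceding Lemma \ref{disjoint} this says $U_{I,J}^{(t)}(n)=V_{I,J}^{(t)}(n)\setminus\bigcup_{m}P_m(n)$, the union running over $t+1\le m\le n-(t+1)$ with $m\neq n-k+t$. Lemma \ref{disjoint} tells us this union is disjoint, so
$$u_{I,J}^{(t)}(n)=v_{I,J}^{(t)}(n)-\sum_{\substack{m=t+1\\ m\neq n-k+t}}^{n-(t+1)}|P_m(n)|,$$
where $v_{I,J}^{(t)}(n)=|I|^{t+1}|J|^2q^{n-t-3}$, which equals $q^{k-2}|I|^{n-k}|J|^2$ when $t=n-k-1$. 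This identity is the backbone of the whole proof.

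Next I would compute $|P_m(n)|$ by counting, for each fixed admissible suffix in $U_{I,J}^{(t)}(m)$, the number of length-$(n-m)$ prefixes completing it to a member of $V_{I,J}^{(t)}(n)$; since the two blocks of coordinates are positionally disjoint and $V_{I,J}^{(t)}(n)$ imposes no coupling between them, this factorises as $|P_m(n)|=(\text{prefix count})\cdot u_{I,J}^{(t)}(m)$. The prefix must satisfy $(s_1,\dots,s_t)\in I^t$, $s_{t+1}\in J$, together with $s_{n-k+t}\in I$ when that coordinate lies in the prefix. The decisive dichotomy is whether position $n-k+t$ falls in the prefix ($m\le k-t$) or in the suffix ($m>k-t$). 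In the first case the prefix count is $|I|^{t+1}|J|q^{n-t-m-2}$; in the second case position $n-k+t$ sits at suffix-local index $m-k+t$, and since $k-t<m\le n-(t+1)=k$ forces $1\le m-k+t\le t$, this coordinate already lies in the leading $I^t$-block of the suffix and the constraint is automatic, leaving prefix count $|I|^t|J|q^{n-t-m-1}$. Substituting these into the backbone identity gives exactly the recurrence of the fourth case; its specialisation $t=2$, $k=n-2$ (where $k-t=n-4$ and the lone case-two index is $m=n-3$) is the fifth case.

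To obtain the three closed forms I would unfold the recurrence using the fact, already extracted in the proof of Lemma \ref{PQ}, that $U_{I,J}^{(t)}(m)=V_{I,J}^{(t)}(m)$, hence $u_{I,J}^{(t)}(m)=v_{I,J}^{(t)}(m)$, whenever $m<n-k+t=2n-2k-1$. The governing thresholds are then pure bookkeeping. When $\frac n2\le k<\frac23 n-\frac13$ one checks $k<2n-2k-1$, so every admissible $m\le k$ obeys $u=v$ and moreover $m>k-t$, so all terms are of case two; each of the $2k-n$ terms with $n-k+1\le m\le k$ contributes the constant $|I|^{2(n-k)-2}|J|^3q^{2k-n-1}$ independent of $m$, while the single term $m=n-k$ contributes $|I|^{2(n-k)-2}|J|^2q^{2k-n}$, and collecting these collapses to the first displayed formula. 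The boundary $k=\frac23 n-\frac13$ is the same computation except that now $n-k=k-t$, so the term $m=n-k$ switches to case one, producing the extra $|I|$-summand of the second formula. For $\frac23 n\le k<\frac34 n-\frac12$ the indices $2n-2k\le m\le k$ satisfy $U_{I,J}^{(t)}(m)\neq V_{I,J}^{(t)}(m)$ and must be expanded one further level; there the inner indices $\ell\le 2k-n<2n-2k-1$ all satisfy $u=v$, so the inner sum is again a collapsing geometric sum, and the resulting double sum yields the cubic term $|I|^{3(n-k)-3}|J|^3q^{3k-2n-1}(\cdots)$ of the third formula.

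The main obstacle I anticipate is not any single estimate but the threshold bookkeeping: correctly partitioning the index range $[t+1,n-(t+1)]\setminus\{n-k+t\}$ into the case-one region $m\le k-t$, the case-two region with $u=v$, and (once $k\ge\frac23 n$) the case-two region with $u\neq v$, then checking that each piece either collapses to a constant times a count or, at the second recursion level, to the telescoping double sum giving the cubic correction. A secondary delicate point is justifying the factorisation $|P_m(n)|=(\text{prefix count})\cdot u_{I,J}^{(t)}(m)$, specifically the claim that when $m>k-t$ the global constraint $s_{n-k+t}\in I$ is subsumed by the suffix's leading $I^t$-block; this is exactly where the equality $n-(t+1)=k$ (so $m\le k$) is essential and must be invoked explicitly.
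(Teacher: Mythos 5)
Your proposal is correct and takes essentially the same route as the paper's own proof: the same backbone identity $u_{I,J}^{(t)}(n)=|V_{I,J}^{(t)}(n)|-\sum_m |P_m(n)|$ justified by Lemma \ref{disjoint}, the same prefix-count factorization with the dichotomy $m\leq k-t$ versus $m>k-t$ (the paper's sets $A_m(n)$ and $B_m(n)$, with identical counts $|I|^{t+1}|J|q^{n-t-m-2}$ and $|I|^{t}|J|q^{n-t-m-1}$), the same use of $U_{I,J}^{(t)}(m)=V_{I,J}^{(t)}(m)$ for $m<n-k+t$, and the same threshold-by-threshold unfolding into the five cases. One cosmetic caveat: your equality $n-(t+1)=k$ holds only when $t=n-k-1$ (for $k=n-2$ one has $n-(t+1)=k-1$), but the bound you actually need, $m-k+t\leq t$, still follows from $t\geq n-k-1$, so nothing breaks.
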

\begin{proof}
By Construction $3.5'$, the cardinality of  $ U_{I,J}^{(t)}(n)$, denoted by $u_{I,J}^{(t)}(n)$, is equal to
$|V^{(t)}_{I,J}(n)|$ minus the size of the subset of vectors in
$V_{I,J}^{(t)}(n)$ whose $m$-length
suffixes belong  to $U_{I,J}^{(t)}(m)$ for every positive integer $m$ with $t+1 \leq m\leq n-(t+1)$ and $m\neq n-k+t$.
For convenience, for any fixed integer $m$ with $t+1 \leq m\leq k-t$ and $m\neq n-k+t$, the subset of vectors in $V_{I,J}^{(t)}(n)$ whose $m$-length
suffixes belong  to $U_{I,J}^{(t)}(m)$  is denoted by
\begin{equation*}
\label{eq1}
A_m(n)=\big\{\big(\alpha,\beta,U_{I,J}^{(t)}(m)\big)\,\big|\,\alpha\in I^t\times J\times\mathbb{Z}_q^{n-k-2}\times I,\beta\in\mathbb{Z}_q^{k-m-t}\big\};
\end{equation*}
for any fixed integer $m$ with $k-t<m\leq n-(t+1)$ and $m\neq n-k+t$, the subset of vectors in $V_{I,J}^{(t)}(n)$ whose $m$-length
suffixes belong  to $U_{I,J}^{(t)}(m)$  is denoted by
\begin{equation*}
\label{eq2}
B_m(n)=\big\{\big(\alpha,\beta,U_{I,J}^{(t)}(m)\big)\,\big|\,\alpha\in I^t\times J,\beta\in\mathbb{Z}_q^{n-t-1-m}\big\}.
\end{equation*}
Lemma \ref{disjoint} tells us that the sets
$A_m(n),A_{m'}(n)$,  $B_m(n)$ and $B_{m'}(n)$  are mutually disjoint for distinct integers  $m$ and $m'$ in the range
$t+1\leq m, m'\leq n-(t+1)$ and $m,m'\neq n-k+t$.

We obtained the desired result by dividing the values of $k$ into $4$ cases.

{\bf Case 1).} $\frac{n}{2}\leq k<\frac{2}{3}n-\frac{1}{3}$.  In this case $t=n-k-1$ and $k-t<t+1\leq n-(t+1)<n-k+t$.
By the definition of $A_m(n)$ and the fact $k-t<t+1$, one has $A_m(n)=\emptyset$. Using the fact $k-t<t+1$ again,
it is readily seen that $B_m(n)=\emptyset$ for $k-t<m<t+1$.
It follows from  Construction $3.5'$ that
\begin{align*}
u_{I,J}^{(t)}(n)&=|V_{I,J}^{(t)}(n)|-\sum\limits_{{\scriptsize\begin{array}{c}m=t+1, \end{array}}}^{n-(t+1)}|B_m(n)|\\
&=q^{n-t-3}|I|^{t+1}|J|^2-\sum\limits_{{\scriptsize\begin{array}{c}m=t+1,\\ \end{array}}}^{n-(t+1)}|I|^t|J|q^{n-t-m-1}u_{I,J}^{(t)}(m)\\
&=q^{n-t-3}|I|^{t+1}|J|^2-|I|^{2t}|J|^2q^{2k-n}-\sum\limits_{{\scriptsize\begin{array}{c}m=t+2,\\ \end{array}}}^{n-(t+1)}|I|^t|J|q^{n-t-m-1}|I|^t|J|q^{m-t-2}|J|\\
&=|I|^{n-k}|J|^2q^{k-2}-|I|^{2(n-k)-2}|J|^2q^{2k-n-1}\big((2k-n)|J|+q\big),
\end{align*}
where the last but one equality holds because
$u_{I,J}^{(t)}(t+1)=|I|^t|J|$ and $u_{I,J}^{(t)}(m)=|I|^t|J|q^{m-t-2}|J|$ for   $t+2 \leq m\leq n-(t+1)<n-k+t$.

{\bf Case 2).} $\frac{2}{3}n-\frac{1}{3}\leq k<\frac{3}{4}n-\frac{1}{2}$.  In this case $t=n-k-1$
and $t+1\leq k-t<n-k+t\leq n-(t+1)$.  Based on Construction $3.5'$, we have

\begin{align*}
u_{I,J}^{(t)}(n)&=|V_{I,J}^{(t)}(n)|-\sum\limits_{m=t+1}^{k-t}|A_m(n)|-\sum\limits_{{\scriptsize\begin{array}{c}m=k-t+1,\\m\neq n-k+t \end{array}}}^{n-(t+1)}|B_m(n)| \\
&=q^{n-t-3}|I|^{t+1}|J|^2-\sum\limits_{m=t+1}^{k-t}|I|^{t+1}|J|q^{n-t-m-2}u_{I,J}^{(t)}(m)-
\sum\limits_{{\scriptsize\begin{array}{c}m=k-t+1,\\m\neq n-k+t \end{array}}}^{n-(t+1)}|I|^t|J|q^{n-t-m-1}u_{I,J}^{(t)}(m).
\end{align*}
At this point, we consider two subcases:
\begin{itemize}
\item $k=\frac{2}{3}n-\frac{1}{3}$. We have  $k-t=t+1$ and  $n-(t+1)=n-k+t$. Hence,
\begin{align*}
u_{I,J}^{(t)}(n)&= q^{n-t-3}|I|^{t+1}|J|^2-|I|^{t+1}|J|q^{n-(t+1)-(t+2)}u_{I,J}^{(t)}(t+1)-
\sum\limits_{{\scriptsize\begin{array}{c}m=t+2 \end{array}}}^{n-(t+2)}|I|^t|J|q^{n-t-m-1}u_{I,J}^{(t)}(m)\\
&=q^{n-t-3}|I|^{t+1}|J|^2-|I|^{t+1}|J|q^{n-2t-3}|I|^t|J|-\sum\limits_{{\scriptsize\begin{array}{c}m=t+2 \end{array}}}^{n-(t+2)}|I|^t|J|q^{n-t-m-1}|I|^t|J|^2q^{m-t-2}\\
&=q^{k-2}|I|^{n-k}|J|^2-|I|^{2n-2k-2}|J|^2q^{2k-n-1}\big(|I|+(2k-n-1)|J|\big),
\end{align*}
where the last but one equality holds because
$u_{I,J}^{(t)}(t+1)=|I|^t|J|$ and  $u_{I,J}^{(t)}(m)=|I|^t|J|q^{m-t-2}|J|$ for $t+2 \leq m<n-k+t$.

\item $\frac{2}{3}n \leq k <\frac{3}{4}n-\frac{1}{2}$. We have $t+2 \leq k-t<t+n-k+1<n-(t+1)$ and
$u_{I,J}^{(t)}(n)$ is equal to
\begin{align*}
&q^{n-t-3}|I|^{t+1}|J|^2-\sum\limits_{m=t+1}^{k-t}|I|^{t+1}|J|q^{n-t-m-2}u_{I,J}^{(t)}(m)-
\sum\limits_{{\scriptsize\begin{array}{c}m=k-t+1,\\m\neq n-k+t \end{array}}}^{n-(t+1)}|I|^t|J|q^{n-t-m-1}u_{I,J}^{(t)}(m)\\
&=q^{k-2}|I|^{n-k}|J|^2-\sum\limits_{m=t+1}^{k-t}|I|^{t+1}|J|q^{n-t-m-2}u_{I,J}^{(t)}(m)-
\sum\limits_{{\scriptsize\begin{array}{c}m=k-t+1, \end{array}}}^{n-k+t-1}|I|^t|J|q^{n-t-m-1}u_{I,J}^{(t)}(m)\\
& -\sum\limits_{{\scriptsize\begin{array}{c}m=n-k+t+1, \end{array}}}^{n-(t+1)}|I|^t|J|q^{n-t-m-1}u_{I,J}^{(t)}(m)\\
&=q^{k-2}|I|^{n-k}|J|^2-|I|^{2n-2k-2}|J|^2q^{2k-n-2}((6k-4n+2)|I||J|+|I|q+(3n-3-4k)|J|q)\\
&+|I|^{3n-3k-3}|J|^3q^{3k-2n-1}((3k-2n+1)q+\frac{(3k-2n+1)(3k-2n)}{2}|J|),
\end{align*}
where the last but one equality holds because   $u_{I,J}^{(t)}(t+1)=|I|^t|J|$,
 $u_{I,J}^{(t)}(m)=|I|^t|J|q^{m-t-2}|J|$ for $t+2 \leq m<t+n-k$,
 and the values of $u_{I,J}^{(t)}(m)$ satisfies {\bf Case 1)} for   $n-k+t+1\leq m \leq n-(t+1)$.
\end{itemize}

{\bf Case 3).}  $\frac{3}{4}n-\frac{1}{2}\leq k<n-2$.
In this case    $n>6$, $t=n-k-1$ and $t+1<t+n-k\leq k-t<n-(t+1)$.  We have
\begin{align*}
 u_{I,J}^{(t)}(n) &=|V_{I,J}^{(t)}(n)|-\sum\limits_{{\scriptsize\begin{array}{c}m=t+1,\\m\neq t+n-k \end{array}}}^{k-t}|A_m(n)|-\sum\limits_{m=k-t+1}^{n-(t+1)}|B_m(n)| \\
& =q^{k-2}|I|^{n-k}|J|^2-\sum\limits_{{\scriptsize\begin{array}{c}m=t+1,\\m\neq t+n-k \end{array}}}^{k-t}
|I|^{t+1}|J|q^{n-t-m-2}u_{I,J}^{(t)}(m)- \\
& \sum\limits_{m=k-t+1}^{n-(t+1)}|I|^t|J|q^{n-t-m-1}u_{I,J}^{(t)}(m).
\end{align*}


{\bf Case 4).}  $k=n-2$ and $n\geq 8$.  In this case $t=2$ and $t+1<n-k+t\leq k-t<n-(t+1)$. We have
\begin{align*}
u_{I,J}^{(t)}(n)&=|V_{I,J}^{(t)}(n)|-\sum\limits_{{\scriptsize\begin{array}{c}m=3,\\m\neq 4 \end{array}}}^{n-4}|A_m(n)|-\sum\limits_{m=n-3}^{n-3}|B_m(n)|\\
& =q^{n-5}|I|^{3}|J|^2-\sum\limits_{{\scriptsize\begin{array}{c}m=3,\\m\neq 4 \end{array}}}^{n-4}
|I|^{3}|J|q^{n-m-4}u_{I,J}^{(2)}(m)- |I|^2|J|u_{I,J}^{(2)}(n-3).
\end{align*}
\end{proof}


We close this section by listing the values of $|S_2^{(k)}(n)|$ and $|S_2^{(k)}(n)\bigcup U_{2}^{(t)}(n)|$
in Tables $1$ and $2$ in the case where $n\geq5$ and $n/2\leq k\leq n-2$, respectively.
Through the tabulation form, one can easily see how many larger is $|S_2^{(k)}(n)\bigcup U_{2}^{(t)}(n)|$
than $|S_2^{(k)}(n)|$.
We take $n=17$ as an example to illustrate how to read Table $1$ (similarly Table $2$). First, fix an integer $n$; we take $n=17$.
Substituting $n$ into $n/2\leq k\leq n-2$, we see $9\leq k\leq15$. The number ``$64$" in the row index by ``$17$" and the column index by ``$9$"
means $|S_2^{(9)}(17)|=64.$ As an explicit example, we see that $|S_2^{(15)}(17)|=1$ and
$|S_2^{(15)}(17)\bigcup U_{2}^{(2)}(17)|=1433$.

\begin{table}[H]
    \caption{Cardinalities of  $S_2^{(k)}(n)$ ($5\leq n\leq 17, n/2\leq k\leq n-2$)}
    \begin{center}
    \begin{tabular}{c|p{18pt}p{18pt}p{18pt}p{18pt}p{18pt}p{18pt}p{18pt}p{18pt}p{18pt}p{18pt}p{18pt}p{18pt}p{18pt}}
        \hline\hline
        \diagbox{$n$}{$k$} & 3 & 4 & 5&6&7&8&9&10&11&12&13&14&15 \\
        \hline
        5 & 1 & & & & & & & & & & & & \\

        6 & 2& 1& & & & & & & & & & & \\

        7 &  & 2& 1&& & & & & & & & & \\

        8 &  & 4& 2& 1& & & & & & & & & \\

        9 &  & &4 & 2& 1& & & & & & & & \\

        10 &  & & 8& 4& 2& 1& & & & & & & \\

        11 &  & & &  8& 4& 2&1 & & & & && \\

        12 &  & & & 16& 8& 4& 2& 1& & & & & \\

        13 &  & & & &16& 8& 4& 2& 1& & & &  \\

        14 &  & & & &32&16& 8& 4& 2& 1& & &  \\

        15 &  & & & &&32&16& 8& 4& 2& 1& &   \\

        16 &  & & & &&64&32&16& 8& 4& 2& 1&   \\

        17 &  & & & &&&64&32&16& 8& 4& 2& 1\\
        \hline\hline
        \end{tabular}
    \end{center}
\end{table}

\begin{table}[H]
    \caption{Cardinalities of  $S_2^{(k)}(n)\bigcup U_{2}^{(t)}(n)$ ($5\leq n\leq 17, n/2\leq k\leq n-2$)  }
    \begin{center}
    \begin{tabular}{c|p{18pt}p{18pt}p{18pt}p{18pt}p{18pt}p{18pt}p{18pt}p{18pt}p{18pt}p{18pt}p{18pt}p{18pt}p{18pt}}
        \hline\hline
        \diagbox{$n$}{$k$} & 3 & 4 & 5&6&7&8&9&10&11&12&13&14&15 \\
        \hline
        5 & 2 & & & & & & & & & & & & \\

        6 & 3 &  2& & & & & & & & & & & \\

        7 &  & 3&  4& & & & & & & & & & \\

        8 &  & 7& 6& 6& & & & & & & & & \\

        9 &  & & 9& 11& 11& & & & & & & & \\

        10 &  & & 15& 12& 19& 19& & & & & & & \\

        11 &  & & & 21& 24& 34& 35& & & & & & \\

        12 &  & & & 31& 32& 45& 59& 64& & & & & \\

        13 &  & & & & 45& 52& 89& 107& 119& & & & \\

        14 &  & & & & 63& 72& 104& 166& 198& 221& & & \\

        15 &  & & & & & 93& 124& 201& 320& 371& 412& & \\

        16 &  & & & & & 127& 152& 224& 397& 615& 699& 768& \\

        17 &  & & & & & & 189& 268& 448& 794& 1173 &  1314 &   1433\\
        \hline\hline
        \end{tabular}
    \end{center}
\end{table}
\section{Concluding remarks and future work}
One of the main research problems on cross-bifix-free codes is to construct cross-bifix-free codes as large as possible in size.
In this paper, we further study the cross-bifix-free codes $S_{I,J}^{(k)}(n)$.
We improve the results  in \cite[Theorem 3.1]{Chee} and \cite[Theorem 1]{Wang} by
showing that $S_{I,J}^{(k)}(n)$ is  expandable precisely when   $n/2\leq k\leq n-2.$
We then
construct a new family of cross-bifix-free codes $U^{(t)}_{I,J}(n)$ to expand $S_{I,J}^{(k)}(n)$ such that
the resulting larger code $S_{I,J}^{(k)}(n)\bigcup U^{(t)}_{I,J}(n)$ is a non-expandable
cross-bifix-free code whenever $S_{I,J}^{(k)}(n)$ is expandable.
Finally, we present an explicit  formula for the size of
$S_{I,J}^{(k)}(n)\bigcup U^{(t)}_{I,J}(n)$.

There may be more than one way to expand  $S_{I,J}^{(k)}(n)$ to a non-expandable cross-bifix-free code
in the case where  $n/2\leq k\leq n-2.$
A possible
direction for future  work is to find  a new cross-bifix-free code  $U$ such that  $U\bigcup S_{I,J}^{(k)}(n)$
has the largest code size among all the expansion of  $S_{I,J}^{(k)}(n)$.


\end{document}